\newtheorem{prelem}{{\bf Theorem}}
 \newtheorem{theorem}{Theorem}
\newtheorem{corollary}[theorem]{Corollary}
\newtheorem{lemma}[theorem]{Lemma}
\newtheorem{proposition}[theorem]{Proposition}
\theoremstyle{definition}
\newtheorem{definition}[theorem]{Definition}
\theoremstyle{remark}
\theoremstyle{conjecture}
\title{Generalized geometric Hamilton-Jacobi theorem on Lie algebroids}
\date{}
\author
{Gh. Haghighatdoost$^1$, R. Ayoubi$^2$ \\
Department of Mathematics\\ Azarbaijan Shahid Madani University\\
Tabriz, Iran\\{ \tt $^1$gorbanali@azaruniv.ac.ir}\\ {\tt $^2$rezvaneh.ayoubi@azaruniv.ac.ir}}
\begin{document}

\maketitle

%-----------------------------------------------------------------------
\begin{abstract}
In this paper, some of formulations of Hamilton-Jacobi equations for Hamiltonian system on Lie algebroids are given. Here we use the general properties of Lie algebroids to express and prove two geometric version of the Hamilton-Jacobi theorem for Hamiltonian system on Lie algebroids. Then this results are generalized and two types of time-dependent Hamilton-Jacobi theorem of Hamiltonian system on Lie algebroids are obtained.\\

\noindent{\bf AMS}: 70H20, 70H06, 17B66.  \\

\noindent {\bf keywords}: Lie algebroids, Hamilton-Jacobi theory, Hamiltonian system.

% \subclass{MSC code1 \and MSC code2 \and more}
\end{abstract}
\maketitle

\section{Introduction}
In 1820, Hamilton developed the principles of the Hamilton-Jacobi theory for issues in wave and geometric obtics. His intention was to put forward a complete theory for obtics as Lagrange dictated in the years before. His work on obtics was completed until the age of 22 and he present a comprehensive theory of particles and light-wave concepts. After that, Hamilton develop his work on mechanics. In his Second Essay on a General Method in Dynamics (1835), he has raised both Hamilton-Jacobi equation and Hamilton's canonical equations.
Although the integration of partial differential equations is more difficult than solving arbitrary equations, the Hamilton-Jacobi theory was expressed to be useful tool for studying issues of obtics, mechanics and geometry.\\
In many cases, the Hamilton-Jacobi equation provides a method for integrating the motion equations of the system, even when the Hamiltonian system does not respond or is not completely solvable. Therefore, the Hamilton-Jacobi theory has been very much considered and developed in recent years.\\
It is obvious that Hamilton-Jacobi theory from the variational point of view is originally developed by Jacobi in 1866, which state that the integral of Lagrangian of a system along the solution of its Euler-Lagrange equation satisfies the Hamilton-Jacobi equation [3].
Abraham and Marsden described this issue geometrically as a theorem, see theorem 5.2.4 in [1]. This theorem describes solutions to Hamilton-Jacobi equation by finding particular integral curves of Hamiltonian vector field, which those integral curves are obtained, as possible, from integral curves of a vector field on smooth manifold Q.\\
Hong Wang proved two types of geometric Hamilton-Jacobi theorem for a Hamiltonian system on the cotangent bundle for a smooth manifold, see [3]. In 2004, Manuel De Leon, Juan C.Marrero and Eduardo Martinez [2], proved the Abraham and Marsden's theorem for Lie algebroids.\\
Now, the problem is, how to generalize  the Hamilton-Jacobi theory for Hamiltonian system on Lie algebroids. In other words, our work is devoted to prove two types of Hamilton-Jacobi theorem for Hamiltonian system on Lie algebroids by using the Manuel De Leon, Juan C.Marrero and Eduardo Martinez's theorem and Hong Wang's work.\\
This paper is organized as follows. In next section we summarize some of the main concepts of Lie algebroids. In section 3 we present a very useful Lemma, which we use to prove our basic theorems, then we prove two types of geometric Hamilton-Jacobi theorem for Hamiltonian system on Lie algebroids and present two examples. In section 4 we generalize the above results to the non-autonomous case and obtain two types of Hamilton-Jacobi theorem for the time-dependent Hamiltonian system.

%=============================================================================================
\section{Some Basic Concepts and Definitions}

\subsection{Lie algebroids}
\begin{definition}
\label{dfn-algebroid}
A Lie algebroid over a manifold $M$ is a vector bundle $ E $  of rank $ n $ over
$M$ equipped with a Lie
algebra structure $[~,~]$ on its space of sections and a bundle map
$\rho : E \rightarrow TM$ called the anchor,
which induces a Lie algebra homomorphism (also denoted $\rho $)
from sections of $E$ to vector fields on $M$.
The identity $$ [f X , Y] = f[X ,Y ]+\rho \left( X \right)(f)Y \enspace $$
must be satisfied for every smooth function $f$ on $M$.
\end{definition}

The standard local coordinates on $E $ have the form $(x,\lambda)$
where the $x_{i}$'s are coordinates on the base $M$ and the $ \lambda
_{i}$'s are linear coordinates on the fibres , associated with a basis
$ X_{i} $ of sections of the Lie algebroid.  In terms of such
coordinates, the bracket and anchor have expressions $ [X _{i},X
_{j}]=\sum c_{ij} ^{k} X _{k} $ and $ \rho (X_{i})= \sum a_{ij} \frac{\partial}{\partial x_{j}},$
 where the $c_{ij} ^{k}$ and $ a_{ij} $  are structure functions lying is $ C^{\infty}(M)$.
 
\subsection{Examples}
\label{subsec-firstexamples}
The tangent bundle $TQ$ on $ Q $ is the basic example of Lie algebroid over $ Q $,\\
 $ \tau_{Q} : TQ \longrightarrow Q $, with the identity mapping as anchor.  Also any integrable subbundle of $TQ$ is a Lie algebroid with the inclusion as anchor and the induced bracket.  Therefore any Lie algebra $\frak g $ is a Lie algebroid over a point.\\
More generally, if $Q$ is a principal $G$-bundle over $M=Q/G,  \pi : Q \longrightarrow M $, then $TP/G$ is a vector bundle over $M, \tau_{Q \vert G} : TQ/G \longrightarrow Q/G $,  whose sections are the $G$-equivariant vector fields (infinitesimal gauge transformations) on $Q$. These sections inherit the bracket from $TP$, and the derivative $TP \rightarrow TM$ of the projection from $Q$ to $M$  passes to a map from $ TP/G $ to $TM$ which is the anchor of a Lie algebroid structure on $ TP/G $, $ \rho ( [v_{q} ] ) = ( T_{q} \pi ) (v_{q}) $  which we call the {\bf gauge algebroid} of $Q$.\\
Another example comes from actions of Lie algebras on manifolds. If $ \varphi $ be an action of Lie algebra $ \frak g $ on $ M $, we denote the infinitesimal action of $ \frak g $ on manifold $ M $ by $ \xi_{M}:=d \varphi_{x} (\xi): \frak g \longrightarrow \chi (M) $ the {\bf action algebroid} is the
trivial bundle $M \times \frak g $ where $ \tau:M\times \frak g \longrightarrow M $ is projection over the first factor, with the anchor
 $\rho (x,\xi)=\xi_{M} (x)=d\varphi_{x}(\xi)$ and the bracket $[X
,Y ](x)=[X (x),Y (x)]+d\varphi_{x} (X )\cdot Y -d\varphi_{x} (Y )\cdot X.$ \\
In the bracket formula, we have identified sections of $M\times
\frak g $ with $\frak g $-valued functions on $M$. \\

\subsection{Hamiltonian dynamics on Lie algebroids}
\label{subsec-algebroids}

In this section, we review some basic facts about Lie algebroids, see [2] for more details.\\
Let $ E $ be a Lie algebroid over a manifold $ M $, and $ \tau^{\ast} : E^{\ast} \longrightarrow M $ be a vector bundle projection of $ E^{\ast} $ to $ M $, where $ E^{\ast} $ is dual bundle of $ E $. Define the prolongation of $ E $ over $ E^{\ast} $ as follows
$$ \mathcal{T}^{\tau^{\ast}} E = \lbrace (b,v) \in E \times TE^{\ast} \vert \rho(b) = T\tau^{\ast} (v) \rbrace. $$

$ \mathcal{T}^{\tau^{\ast}} E  $ is a Lie algebroid of rank $ 2n $ over $ E^{\ast} $ with the following Lie algebroid structure:\\

If $ ( f_{i} (X_{i} \circ \tau^{\ast}) , X^{'} ) $ and $ ( g_{j} (Y_{j} \circ \tau^{\ast}) , Y^{'} ) $ are two sections of $ \mathcal{T}^{\tau^{\ast}} E $, with \\
 $ f_{i} , g_{j} \in C^{\infty} (E^{\ast}) $, $ X_{i} , Y_{j} \in \Gamma (E) $ and $ X^{'} , Y^{'} \in \mathfrak{X} (E^{\ast}) $, then\\
$$ [ ( f_{i} (X_{i} \circ \tau^{\ast}) , X^{'} ) , ( g_{j} (Y_{j} \circ \tau^{\ast}) , Y^{'} ) ] =( f_{i} g_{j} ([ X_{i} , Y_{j} ] \circ \tau^{\ast})$$
$$ + X^{'}(g_{j}) (Y_{j}\circ \tau^{\ast} ) - Y^{'} (f_{i}) (X_{i} \circ \tau^{\ast} ) , [X^{'} , Y^{'} ] ) $$

\begin{center}
$  \rho  ( f_{i} (X_{i} \circ \tau^{\ast}) , X^{'} ) =X^{'} $
\end{center}
If $ a^{\ast} \in E^{\ast} $ and $ (b , v) \in \mathcal{T}^{\tau^{\ast}} E $ over $ a^{\ast} $, then

\begin{center}
$ \Theta (a^{\ast}) (b,v) = a^{\ast}(b) $
\end{center}
is called \textbf{Liouville section} of $ ( \mathcal{T}^{\tau^{\ast}} E )^{\ast} $, and the \textbf{canonical symplectic section} $ \Omega $ is defined by
\begin{center}
$ \Omega = -d\Theta $
\end{center} 
\begin{proposition}
\label{prop-poissonreduction}
$ \Omega $ is symplectic section of Lie algebroid $ \mathcal{T}^{\tau^{\ast}} E  $, that is $ d \Omega = 0 $ and it is non-degenerate 2-section.
\end{proposition}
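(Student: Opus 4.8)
The plan is to separate the two assertions. The equality $d\Omega = 0$ is formal: because $\mathcal{T}^{\tau^{\ast}}E$ is a Lie algebroid (as recalled above), its exterior differential $d$ is a bona fide cochain operator, so $d^{2}=0$ on sections of the exterior algebra of $(\mathcal{T}^{\tau^{\ast}}E)^{\ast}$; hence $d\Omega = -d(d\Theta) = -d^{2}\Theta = 0$. Everything therefore reduces to the non-degeneracy of $\Omega$, which I would establish by computing $\Omega$ in a local frame.

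Fix coordinates $(x_{i})$ on $M$, a local basis $\{X_{i}\}$ of $\Gamma(E)$ with structure functions $c_{ij}^{k}$, $a_{ij}$ as in Definition~\ref{dfn-algebroid}, and the fibre coordinates $(x_{i},p_{i})$ on $E^{\ast}$ dual to $\{X_{i}\}$. Then a local basis of $\Gamma(\mathcal{T}^{\tau^{\ast}}E)$ is
$$ \mathcal{X}_{i}(a^{\ast}) = \left( X_{i}(\tau^{\ast}(a^{\ast})),\ \sum_{j} a_{ij}\,\frac{\partial}{\partial x_{j}}\Big|_{a^{\ast}} \right), \qquad \mathcal{P}^{i}(a^{\ast}) = \left( 0,\ \frac{\partial}{\partial p_{i}}\Big|_{a^{\ast}} \right), $$
which do lie in $\mathcal{T}^{\tau^{\ast}}E$ because $\rho(X_{i})$ is the $T\tau^{\ast}$-image of the vector in the second slot of $\mathcal{X}_{i}$, while $T\tau^{\ast}$ annihilates $\partial/\partial p_{i}$. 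Writing $\{\mathcal{X}^{i},\mathcal{P}_{i}\}$ for the dual coframe and using the Lie algebroid structure of $\mathcal{T}^{\tau^{\ast}}E$ recalled above, one checks $[\mathcal{X}_{i},\mathcal{X}_{j}] = \sum_{k} c_{ij}^{k}\,\mathcal{X}_{k}$, $[\mathcal{X}_{i},\mathcal{P}^{j}] = 0 = [\mathcal{P}^{i},\mathcal{P}^{j}]$, $\rho(\mathcal{X}_{i}) = \sum_{j}a_{ij}\frac{\partial}{\partial x_{j}}$ and $\rho(\mathcal{P}^{i}) = \frac{\partial}{\partial p_{i}}$; the homomorphism property of $\rho$ is exactly what makes the second components of $[\mathcal{X}_{i},\mathcal{X}_{j}]$ close up.

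From $\Theta(a^{\ast})(b,v) = a^{\ast}(b)$ one reads off $\Theta(\mathcal{X}_{i}) = p_{i}$ and $\Theta(\mathcal{P}^{i}) = 0$, i.e. $\Theta = \sum_{i} p_{i}\,\mathcal{X}^{i}$. Substituting the brackets and anchors above into $d\Theta(Y,Z) = \rho(Y)(\Theta(Z)) - \rho(Z)(\Theta(Y)) - \Theta([Y,Z])$ yields
$$ \Omega = -d\Theta = \sum_{i} \mathcal{X}^{i}\wedge \mathcal{P}_{i} + \tfrac{1}{2}\sum_{i,j,k} c_{ij}^{k}\,p_{k}\,\mathcal{X}^{i}\wedge \mathcal{X}^{j}. $$
In the ordered frame $(\mathcal{X}_{1},\dots,\mathcal{X}_{n},\mathcal{P}^{1},\dots,\mathcal{P}^{n})$ the matrix of $\Omega$ is $\left(\begin{smallmatrix} C & I_{n} \\ -I_{n} & 0 \end{smallmatrix}\right)$ with $C = \big(\sum_{k}c_{ij}^{k}p_{k}\big)_{i,j}$ skew-symmetric, and a one-line row reduction shows its determinant equals $1$; hence $\Omega$ is non-degenerate at every point of $E^{\ast}$, which finishes the proof.

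I do not expect a genuine obstacle here: once $\mathcal{T}^{\tau^{\ast}}E$ is known to be a Lie algebroid, $d\Omega=0$ is automatic, and the rest is bookkeeping — exhibiting a correct local frame for the prolongation and tracking the signs coming out of $d\Theta$. The one place to be a little careful is that last step; and the cleanest way to conclude non-degeneracy is the determinant computation above rather than an explicit inversion of $\Omega$.
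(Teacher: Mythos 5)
Your argument is correct, but there is nothing in the paper to compare it against line by line: the paper does not prove this proposition at all, it simply cites reference [2] (de Le\'on--Marrero--Mart\'inez), and what you have written out is essentially the standard proof that lives in that reference. The formal half is exactly as you say: since $\mathcal{T}^{\tau^{\ast}}E$ is a Lie algebroid, its exterior differential satisfies $d^{2}=0$, so $d\Omega=-d^{2}\Theta=0$ with no computation. Your local bookkeeping for the second half is also sound: the sections $\mathcal{X}_{i}=(X_{i}\circ\tau^{\ast},\sum_{j}a_{ij}\partial/\partial x_{j})$ and $\mathcal{P}^{i}=(0,\partial/\partial p_{i})$ do form a local basis of the rank-$2n$ prolongation, the brackets $[\mathcal{X}_{i},\mathcal{X}_{j}]=\sum_{k}c_{ij}^{k}\mathcal{X}_{k}$, $[\mathcal{X}_{i},\mathcal{P}^{j}]=[\mathcal{P}^{i},\mathcal{P}^{j}]=0$ and anchors follow from the bracket formula recalled in Subsection 2.3 (the vanishing of $[\mathcal{X}_{i},\mathcal{P}^{j}]$ using that $a_{ij}$ and $c_{ij}^{k}$ depend only on the base coordinates), and $\Theta=\sum_{i}p_{i}\mathcal{X}^{i}$ gives the expression $\Omega=\sum_{i}\mathcal{X}^{i}\wedge\mathcal{P}_{i}+\tfrac{1}{2}\sum_{i,j,k}c_{ij}^{k}p_{k}\,\mathcal{X}^{i}\wedge\mathcal{X}^{j}$, which is precisely the formula in [2]; the block matrix you exhibit indeed has determinant $1$ for any skew $C$. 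Two small remarks: the determinant step can be avoided entirely, since if $\Omega(v,\cdot)=0$ for $v=\sum_{i}a^{i}\mathcal{X}_{i}+\sum_{i}b_{i}\mathcal{P}^{i}$, then evaluating against the $\mathcal{P}^{j}$ forces $a^{j}=0$ and then evaluating against the $\mathcal{X}_{j}$ forces $b_{j}=0$; and you should state once which convention for $\alpha\wedge\beta$ and for $d$ on one-sections you are using, since the signs in your displayed $\Omega$ depend on it. In short: your proof is complete and is the expected one; the only difference from the paper is that the paper outsources it.
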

\begin{proof}
See [2].
\end{proof}

 Let $ H:E^{\ast} \longrightarrow \mathbb{R} $ be a Hamilton function and $ d H \in \Gamma ( ( \mathcal{T}^{\tau^{\ast}} E )^{\ast} ) $, then there exist a unique section $ \xi_{H} \in \Gamma ( \mathcal{T}^{\tau^{\ast}} E ) ) $ satisfying
 \begin{center}
$ i_{\xi_{H}} \Omega = d H $
\end{center}
Suppose that $( E , [,] , \rho ) $ be a Lie algebroid over manifold $ M $, and $ \gamma $ be a section of $ E^{\ast} $. Consider the morphism $ (\phi_{\gamma} , \gamma ):= ( ( Id, T\gamma \circ \rho ) , \gamma ) $ between the vector bundles $ E $ and $ \mathcal{T}^{\tau^{\ast}} E $ defined by $ ( Id, T\gamma \circ \rho ) , \gamma )(a) = ( a , (T_{x} \gamma ) (\rho(a))$, for $ a \in E_{x} $ and $ x \in M. $

\begin{proposition}
\label{prop-poissonreduction}
If $ \gamma $ be a section of dual bundle $ E^{\ast} $ to $ M $ then $ (\phi_{\gamma} , \gamma ) $ is a morphism between the Lie algebroids $ E $ and $ \mathcal{T}^{\tau^{\ast}} E. $  Moreover\\
(i) $ (\phi_{\gamma} , \gamma )^{\ast} \Theta = \gamma   $\\
(ii) $(\phi_{\gamma} , \gamma )^{\ast} \Omega = -d \Theta. $
\end{proposition}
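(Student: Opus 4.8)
The plan is to verify that $(\phi_{\gamma},\gamma)$ meets the three conditions defining a morphism of Lie algebroids — it is a vector bundle map landing in $\mathcal{T}^{\tau^{\ast}}E$, it intertwines the anchors, and it intertwines the brackets — and then to read off (i) by a pointwise evaluation and (ii) from the fact that a Lie algebroid morphism commutes with the exterior differential. The first condition is essentially built in: since $\gamma$ is a section, $\tau^{\ast}\circ\gamma=\mathrm{id}_M$, hence $T\tau^{\ast}\circ T_x\gamma=\mathrm{id}_{T_xM}$, so for $a\in E_x$ the pair $\phi_{\gamma}(a)=(a,\,T_x\gamma(\rho(a)))$ satisfies $\rho(a)=T\tau^{\ast}(T_x\gamma(\rho(a)))$ and therefore lies in $(\mathcal{T}^{\tau^{\ast}}E)_{\gamma(x)}$; it is fibrewise linear and covers $\gamma$. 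For the anchors, the anchor of $\mathcal{T}^{\tau^{\ast}}E$ sends $(b,v)$ to $v$, so $\rho^{\mathcal{T}}(\phi_{\gamma}(a))=T_x\gamma(\rho(a))=(T\gamma\circ\rho)(a)$, exactly the intertwining with $T\gamma$ required of a morphism covering $\gamma$.

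The bracket compatibility is the only substantial point, and I would obtain it conceptually rather than by a direct bracket computation. The Lie algebroid $\mathcal{T}^{\tau^{\ast}}E$ is the pullback (prolongation) of $E$ along the surjective submersion $\tau^{\ast}\colon E^{\ast}\to M$, realized inside $E\times TE^{\ast}$ with the projections $\mathrm{pr}_1\colon\mathcal{T}^{\tau^{\ast}}E\to E$ (over $\tau^{\ast}$, readily checked from the displayed bracket) and $\rho^{\mathcal{T}}\colon\mathcal{T}^{\tau^{\ast}}E\to TE^{\ast}$ (the anchor) both being Lie algebroid morphisms; by the universal property of such a pullback, a vector bundle map into $\mathcal{T}^{\tau^{\ast}}E$ is a Lie algebroid morphism as soon as its two components are. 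For $(\phi_{\gamma},\gamma)$ the first component is $\mathrm{pr}_1\circ\phi_{\gamma}=\mathrm{id}_E$, trivially a morphism, and the second is $\rho^{\mathcal{T}}\circ\phi_{\gamma}=T\gamma\circ\rho$, a morphism because it is the composite of the anchor $\rho\colon E\to TM$ (a Lie algebroid morphism over $\mathrm{id}_M$, since by the definition of a Lie algebroid $\rho$ intertwines the bracket of sections with the bracket of vector fields) with the tangent map $T\gamma\colon TM\to TE^{\ast}$ (a Lie algebroid morphism, the tangent functor sending smooth maps to morphisms of the tangent bundles); the base maps are compatible since $\tau^{\ast}\circ\gamma=\mathrm{id}_M$. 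Hence $(\phi_{\gamma},\gamma)$ is a morphism. The alternative — the route of [2] — is purely local: pick the standard local frame $\{\widetilde{X}_i,\overline{X}^{\,i}\}$ of $\Gamma(\mathcal{T}^{\tau^{\ast}}E)$, expand $\phi_{\gamma}\circ X_i$ in it (the coefficients of the vertical part $\overline{X}^{\,i}$ involving $\partial\gamma_j/\partial x^k$ and the anchor functions $a_{ij}$), and check against the displayed bracket formula that $[X_i,X_j]=c_{ij}^kX_k$ is carried to the corresponding relation in $\mathcal{T}^{\tau^{\ast}}E$.

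Granting the morphism property, (i) is a one-line evaluation: for $a\in E_x$,
$$((\phi_{\gamma},\gamma)^{\ast}\Theta)(a)=\Theta(\gamma(x))(\phi_{\gamma}(a))=\Theta(\gamma(x))(a,\,T_x\gamma(\rho(a)))=\gamma(x)(a),$$
using $\Theta(a^{\ast})(b,v)=a^{\ast}(b)$; the right-hand side is the value on $a$ of $\gamma$ viewed as an element of $\Gamma(E^{\ast})=\Omega^1(E)$, so $(\phi_{\gamma},\gamma)^{\ast}\Theta=\gamma$. For (ii), a Lie algebroid morphism intertwines the exterior differentials, so
$$(\phi_{\gamma},\gamma)^{\ast}\Omega=-(\phi_{\gamma},\gamma)^{\ast}(d\Theta)=-d((\phi_{\gamma},\gamma)^{\ast}\Theta)=-d\gamma,$$
the Lie algebroid differential of $\gamma$. (As printed, the right-hand side of (ii), ``$-d\Theta$'', should read $-d\gamma$: the left-hand side is a $2$-section on $E$, whereas $\Theta$ and $d\Theta=-\Omega$ live on $\mathcal{T}^{\tau^{\ast}}E$.)

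The genuine obstacle is the bracket compatibility. On the conceptual route the one thing to pin down is that the Lie algebroid structure written out for $\mathcal{T}^{\tau^{\ast}}E$ in the excerpt coincides with that of the pullback algebroid $(\tau^{\ast})^{!}E$, so that the universal property may be invoked; on the local route the bookkeeping is delicate, since one must use the correct lift of $\partial/\partial x^i$ into $TE^{\ast}$ within the frame of $\mathcal{T}^{\tau^{\ast}}E$ and keep track of the structure functions $c^k_{ij}$ and $a_{ij}$ so that the bracket identities really transport.
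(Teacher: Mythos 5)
Your argument is correct, but note that the paper does not actually prove this proposition: its ``proof'' is a citation to [2] (de~Le\'on--Marrero--Mart\'inez), where the morphism property is established by the local-frame computation you sketch only as an alternative. Your primary route is genuinely different and more structural: you identify $\mathcal{T}^{\tau^{\ast}}E$ with the pullback algebroid of $E$ along the surjective submersion $\tau^{\ast}$ and invoke its universal property, reducing bracket compatibility to the facts that $\mathrm{pr}_{1}\circ\phi_{\gamma}=\mathrm{id}_{E}$ and that $\rho^{\mathcal{T}}\circ\phi_{\gamma}=T\gamma\circ\rho$ is a composite of the anchor with $T\gamma$, both standard morphisms; what this buys is a proof with no structure-function bookkeeping, at the cost of the one verification you yourself flag, namely that the bracket and anchor displayed in the paper for $\mathcal{T}^{\tau^{\ast}}E$ are exactly those of the pullback algebroid $(\tau^{\ast})^{!}E$, so that the universal property applies (equivalently, one can characterize base-changing morphisms by commutation with the exterior differentials, which is also what legitimizes your step $(\phi_{\gamma},\gamma)^{\ast}d=d\,(\phi_{\gamma},\gamma)^{\ast}$ in (ii)). Your pointwise evaluation for (i) matches the standard argument, and you are right that (ii) as printed contains a typo: the correct conclusion is $(\phi_{\gamma},\gamma)^{\ast}\Omega=-d\gamma$, which is precisely the form the paper itself uses later in its key Lemma, whereas $-d\Theta$ lives on the wrong algebroid.
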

\begin{proof}
See [2].
\end{proof}

\section{Geometric Hamiltonian-Jacobi theorem of Hamiltonian system on Lie algebroids}
In [1], Abraham and Marsden presented a geometric description of the solution of Hamilton-Jacobi equation for the cotangent bundles as a theorem. After that, in 2004, Manuel De Leon, Juan C.Marrero and Eduardo Martinez [2], proved the Abraham and Marsden's theorem for Hamiltonian systems on Lie algebroids as follows.
\begin{theorem}
\label{thm-variational groupoid}
Let $ E $ be a Lie algebroid over manifold $ M $ and $ H: E^{\ast} \longrightarrow \mathbb{R} $ be a Hamiltoniam function. Consider Lie algebroid structure on $ \mathcal{T}^{\tau^{\ast}} E $ and let $ \xi_{H} \in \Gamma (\mathcal{T}^{\tau^{\ast}} E ) $ be the corresponding Hamiltonian section. Let $ \gamma \in \Gamma(E^{\ast}) $ be a cocycle, $ d \gamma = 0, $ and denote by $ \xi_{H} ^{\gamma} \in \Gamma (E)$ the section $ \xi_{H} ^{\gamma} = pr_{1} \circ \xi \circ\gamma. $ Then the following condition are equivalent:\\
(i) For any curve $ t \longrightarrow \sigma (t) $ in $ M $ satisfying
\begin{center}
$ \rho (\xi_{H} ^{\gamma}) ( \sigma (t) ) = \dot{\sigma} (t) $ ~~~~~~ for all $ t $
\end{center}
the curve $ t \longrightarrow \gamma ( \sigma (t) ) $ on $ E^{\ast} $ satisfies the Hamilton equation for $ H. $ \\
(ii) $ \gamma $ satisfies the Hamilton-Jacobi equation $ d (H\circ \gamma ) = 0. $
\end{theorem}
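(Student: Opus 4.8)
The plan is to reduce both (i) and (ii) to one and the same geometric statement --- namely, that the sections $\xi_H^\gamma\in\Gamma(E)$ and $\xi_H\in\Gamma(\mathcal{T}^{\tau^*}E)$ are $(\phi_\gamma,\gamma)$-related --- and to detect this relatedness through the symplectic section $\Omega$ by means of the pullback identities $(\phi_\gamma,\gamma)^*\Theta=\gamma$ and $(\phi_\gamma,\gamma)^*\Omega=-d\gamma$ recorded above. First I would pin down the meaning of the Hamilton equation on $E^*$: since the anchor of $\mathcal{T}^{\tau^*}E$ sends $(b,v)$ to $v\in TE^*$, a curve $a^*(t)$ in $E^*$ solves the Hamilton equation for $H$ precisely when $\dot a^*(t)=pr_2\big(\xi_H(a^*(t))\big)$. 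For a curve of the form $a^*(t)=\gamma(\sigma(t))$ with $\rho(\xi_H^\gamma)(\sigma(t))=\dot\sigma(t)$ one has, by the chain rule and the explicit form of $(\phi_\gamma,\gamma)$, that $\dot a^*(t)=(T_{\sigma(t)}\gamma)(\dot\sigma(t))=pr_2\big((\phi_\gamma,\gamma)(\xi_H^\gamma(\sigma(t)))\big)$. Because $\rho(\xi_H^\gamma)$ is a genuine vector field on $M$ and so has an integral curve through every point, condition (i) is equivalent to the pointwise equality $pr_2\circ(\phi_\gamma,\gamma)\circ\xi_H^\gamma=pr_2\circ\xi_H\circ\gamma$ on $M$; since moreover $pr_1\circ(\phi_\gamma,\gamma)\circ\xi_H^\gamma=\xi_H^\gamma=pr_1\circ\xi_H\circ\gamma$ by the very definition of $\xi_H^\gamma$, condition (i) is equivalent to the vanishing of the section $\delta:=\xi_H\circ\gamma-(\phi_\gamma,\gamma)\circ\xi_H^\gamma$ of $\mathcal{T}^{\tau^*}E$ along $\gamma$, a section which in any case satisfies $pr_1\,\delta=0$.

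Next I would run the following computation for an arbitrary section $Y\in\Gamma(E)$ and point $x\in M$. Using that $(\phi_\gamma,\gamma)$ is a Lie algebroid morphism (so that pullback commutes with $d$ and $(\phi_\gamma,\gamma)^*H=H\circ\gamma$) together with $i_{\xi_H}\Omega=dH$, one gets
\[ d(H\circ\gamma)_x(Y_x)=\big((\phi_\gamma,\gamma)^*dH\big)_x(Y_x)=(dH)(\gamma(x))\big((\phi_\gamma,\gamma)Y_x\big)=\Omega(\gamma(x))\big(\xi_H(\gamma(x)),(\phi_\gamma,\gamma)Y_x\big). \]
On the other hand, the cocycle hypothesis $d\gamma=0$ and the identity $(\phi_\gamma,\gamma)^*\Omega=-d\gamma$ give $\Omega(\gamma(x))\big((\phi_\gamma,\gamma)\xi_H^\gamma(x),(\phi_\gamma,\gamma)Y_x\big)=0$. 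Subtracting, by bilinearity of $\Omega$ in its first argument,
\[ d(H\circ\gamma)_x(Y_x)=\Omega(\gamma(x))\big(\delta(x),(\phi_\gamma,\gamma)Y_x\big)\qquad\text{for all }Y_x\in E_x. \]

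From this last identity both implications follow at once. If (i) holds then $\delta=0$ by the first paragraph, so the right-hand side vanishes for every $Y$, hence $d(H\circ\gamma)=0$, which is (ii). Conversely, if (ii) holds then $\delta(x)$ is $\Omega(\gamma(x))$-orthogonal to $\mathrm{Im}\big((\phi_\gamma,\gamma)_x\big)$ for every $x$; but $(\phi_\gamma,\gamma)^*\Omega=-d\gamma=0$ makes this image an isotropic subspace of half the dimension of the symplectic vector space $\big((\mathcal{T}^{\tau^*}E)_{\gamma(x)},\Omega(\gamma(x))\big)$, hence a Lagrangian subspace, which therefore coincides with its own $\Omega$-orthogonal; thus $\delta(x)\in\mathrm{Im}\big((\phi_\gamma,\gamma)_x\big)$, and since $pr_1$ is injective on that image while $pr_1\,\delta(x)=0$, we conclude $\delta(x)=0$ for all $x$, that is, (i). I expect the two slightly delicate points to be the passage between the curve formulation of (i) and the pointwise identity $\delta=0$ (which relies on the existence of integral curves of the vector field $\rho(\xi_H^\gamma)$ through each point of $M$) and the linear-symplectic argument --- isotropic of half dimension is Lagrangian, Lagrangian equals its own orthogonal --- that closes the implication (ii)$\Rightarrow$(i); the remainder is just the pullback calculus for the morphism $(\phi_\gamma,\gamma)$.
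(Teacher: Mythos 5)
Your argument is correct, but there is nothing in the paper to compare it with: the paper does not prove this theorem at all — it is quoted from de Le\'on--Marrero--Mart\'inez and the ``proof'' consists of the citation ``See [2]''. Judged on its own, your proof is sound and uses exactly the ingredients the paper does record (Proposition 2: $(\phi_{\gamma},\gamma)$ is a morphism with $(\phi_{\gamma},\gamma)^{\ast}\Theta=\gamma$, hence $(\phi_{\gamma},\gamma)^{\ast}\Omega=-d\gamma$; and the non-degeneracy of $\Omega$). The three steps all check out: (a) since the anchor of $\mathcal{T}^{\tau^{\ast}}E$ is the second projection, a curve in $E^{\ast}$ solves Hamilton's equations iff it is an integral curve of $pr_{2}\circ\xi_{H}$, and existence of integral curves of the vector field $\rho(\xi_{H}^{\gamma})$ through every point of $M$ turns condition (i) into the pointwise identity $\xi_{H}\circ\gamma=(\phi_{\gamma},\gamma)\circ\xi_{H}^{\gamma}$, the difference $\delta$ being automatically killed by $pr_{1}$; (b) the identity $d(H\circ\gamma)(Y)=\Omega\bigl(\delta,(\phi_{\gamma},\gamma)Y\bigr)$ follows from $i_{\xi_{H}}\Omega=dH$, anchor-compatibility of $(\phi_{\gamma},\gamma)$ (which gives $(\phi_{\gamma},\gamma)^{\ast}dH=d(H\circ\gamma)$ directly, without needing the full morphism property), and $d\gamma=0$; (c) for the converse, the image of $(\phi_{\gamma},\gamma)_{x}$ is $n$-dimensional and isotropic in the $2n$-dimensional symplectic fibre, hence Lagrangian and equal to its own $\Omega$-orthogonal, and injectivity of $pr_{1}$ on that image forces $\delta=0$. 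A pleasant byproduct worth noting: your proof shows that (i) and (ii) are each equivalent to the pointwise equation $(\phi_{\gamma},\gamma)\circ\xi_{H}^{\gamma}=\xi_{H}\circ\gamma$, which is precisely what the paper later calls the Type I Hamilton--Jacobi equation in Theorem 4, so your route also makes explicit the link between Theorem 1 and the paper's subsequent results that the text itself leaves implicit. (Minor point: the statement's ``$\xi_{H}^{\gamma}=pr_{1}\circ\xi\circ\gamma$'' should read $pr_{1}\circ\xi_{H}\circ\gamma$, as you tacitly and correctly assumed.)
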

 \begin{proof}
See [2].
\end{proof}

\begin{definition}
\label{dfn-second order}
The section $ \gamma $ is called to be cocycle with respect to $ pr_{1}  : \mathcal{T}^{\tau^{\ast}} E \longrightarrow E $ if for any $ v , w \in \mathcal{T}^{\tau^{\ast}} E $ we have
 \begin{center}
$  d \gamma ( pr_{1} (v) , pr_{1} (w) ) = 0 . $
\end{center}
\end{definition}

 Now, we prove the key lemma, which in fact is an extension of the results corresponding to Abraham and Marsden in [1] and the previous theorem.

\begin{lemma}
\label{lemma-functorial legendre}
Assume that $ \gamma \in \Gamma (E^{\ast} ) $  is a section of $ E^{\ast} $ and $ \lambda = \gamma \circ \tau^{\ast} : E^{\ast} \longrightarrow E^{\ast} $ and $ \tilde{\mathcal{T}} \lambda := ( \mathcal{T} \lambda \circ \gamma) \circ pr_{1} = (\phi_{\gamma} , \gamma ) \circ pr_{1} : \mathcal{T}^{\tau^{\ast}} E \longrightarrow \mathcal{T}^{\tau^{\ast}} E, $ then we have the following two assertions holds:\\
(i) For any $ x , y \in E $, $ ( \phi_{\gamma} , \gamma)^{\ast} \Omega (x , y) = -d \gamma(x,y),  $ and for every $ v,w \in \mathcal{T}^{\tau^{\ast}} E, $ \\
  $ \tilde{\lambda}^{\ast} \Omega (v,w) = -d \gamma ( pr_{1} (v) , pr_{1} (w) ), $  since $ \Omega $ is the canonical symplectic section of $ \mathcal{T}^{\tau^{\ast}} E. $\\
\\
(ii) For every $ v , w \in \mathcal{T}^{\tau^{\ast}} E,~~~ \Omega ( \tilde{\mathcal{T}} \lambda.v , w ) = \Omega (v , w- \tilde{\mathcal{T}} \lambda.w ) - d\gamma (pr_{1} (v) , pr_{1} (w) ). $

\end{lemma}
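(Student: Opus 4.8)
The plan is to prove the two assertions in sequence, using Proposition 3 (the properties of the morphism $(\phi_{\gamma},\gamma)$) and the naturality of the pullback of sections of the exterior algebra of $(\mathcal{T}^{\tau^{\ast}}E)^{\ast}$ under Lie algebroid morphisms. First, for (i), I would start from part (ii) of Proposition 3, which states $(\phi_{\gamma},\gamma)^{\ast}\Omega = -d\Theta$; but since $\Theta$ pulls back to $\gamma$ by part (i) of that proposition, and pullback commutes with the exterior differential $d$ for algebroid morphisms, we get $(\phi_{\gamma},\gamma)^{\ast}\Omega = -d\bigl((\phi_{\gamma},\gamma)^{\ast}\Theta\bigr) = -d\gamma$. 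Evaluating on $x,y\in E$ gives the first formula. For the second formula of (i), since $\tilde{\mathcal{T}}\lambda = (\phi_{\gamma},\gamma)\circ pr_{1}$, functoriality of pullback gives $\tilde{\lambda}^{\ast}\Omega = pr_{1}^{\ast}\bigl((\phi_{\gamma},\gamma)^{\ast}\Omega\bigr) = pr_{1}^{\ast}(-d\gamma)$, and evaluating on $v,w\in\mathcal{T}^{\tau^{\ast}}E$ yields $\tilde{\lambda}^{\ast}\Omega(v,w) = -d\gamma(pr_{1}(v),pr_{1}(w))$.

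For assertion (ii), the plan is a direct bilinear expansion. Write $\Omega(\tilde{\mathcal{T}}\lambda.v,\,w)$ and decompose the second slot as $w = (w - \tilde{\mathcal{T}}\lambda.w) + \tilde{\mathcal{T}}\lambda.w$, so that by bilinearity
$$\Omega(\tilde{\mathcal{T}}\lambda.v,\,w) = \Omega(\tilde{\mathcal{T}}\lambda.v,\,w - \tilde{\mathcal{T}}\lambda.w) + \Omega(\tilde{\mathcal{T}}\lambda.v,\,\tilde{\mathcal{T}}\lambda.w).$$
The last term is exactly $(\tilde{\lambda}^{\ast}\Omega)(v,w)$, which by part (i) equals $-d\gamma(pr_{1}(v),pr_{1}(w))$. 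For the first term on the right, I would show $\Omega(\tilde{\mathcal{T}}\lambda.v,\,w - \tilde{\mathcal{T}}\lambda.w) = \Omega(v,\,w - \tilde{\mathcal{T}}\lambda.w)$; this amounts to checking that $\tilde{\mathcal{T}}\lambda$ acts as the identity on the first argument when paired against something of the form $w - \tilde{\mathcal{T}}\lambda.w$ via $\Omega$. Concretely, $\Omega(\tilde{\mathcal{T}}\lambda.v - v,\,w - \tilde{\mathcal{T}}\lambda.w)$ should vanish, which follows because $\tilde{\mathcal{T}}\lambda$ is a projector-type map: $\tilde{\mathcal{T}}\lambda\circ\tilde{\mathcal{T}}\lambda = \tilde{\mathcal{T}}\lambda$ (since $\lambda = \gamma\circ\tau^{\ast}$ satisfies $\lambda\circ\lambda = \lambda$ as $\tau^{\ast}\circ\gamma = Id_{M}$), so the image of $\tilde{\mathcal{T}}\lambda$ and the image of $Id - \tilde{\mathcal{T}}\lambda$ are $\Omega$-related in the required way once one uses that $\Omega$ restricted to the image of $\tilde{\mathcal{T}}\lambda$ is controlled by $-d\gamma$ via part (i). Reassembling the two pieces gives
$$\Omega(\tilde{\mathcal{T}}\lambda.v,\,w) = \Omega(v,\,w - \tilde{\mathcal{T}}\lambda.w) - d\gamma(pr_{1}(v),pr_{1}(w)),$$
which is the claim.

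The main obstacle I anticipate is the careful bookkeeping in showing the cross term $\Omega(\tilde{\mathcal{T}}\lambda.v - v,\,w - \tilde{\mathcal{T}}\lambda.w)$ vanishes — in the classical cotangent-bundle case this is the statement that $T\lambda$ splits $TT^{\ast}Q$ compatibly with the canonical symplectic form, and here one must verify the analogous splitting of $\mathcal{T}^{\tau^{\ast}}E$ induced by the idempotent $\tilde{\mathcal{T}}\lambda$ and check that $\Omega$ pairs the two summands correctly. I would handle this by working in the standard local coordinates $(x,\lambda)$ on $E^{\ast}$ together with the induced adapted basis of sections of $\mathcal{T}^{\tau^{\ast}}E$, writing out $\Theta$, $\Omega = -d\Theta$, and the action of $\tilde{\mathcal{T}}\lambda$ explicitly, and then verifying the identity componentwise; alternatively one can argue invariantly using that $w - \tilde{\mathcal{T}}\lambda.w$ lies in the vertical-type kernel on which the relevant part of $\Omega$ is determined purely by the $pr_{1}$-components, so that replacing $\tilde{\mathcal{T}}\lambda.v$ by $v$ does not change the pairing. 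Either route reduces the lemma to the already-established part (i) together with elementary bilinear algebra.
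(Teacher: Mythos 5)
Your proposal is correct and follows essentially the same route as the paper: part (i) via Proposition 3 together with the commutation of pullback and $d$, and part (ii) via a bilinear decomposition in which the term $\Omega(\tilde{\mathcal{T}}\lambda.v,\tilde{\mathcal{T}}\lambda.w)=\tilde{\lambda}^{\ast}\Omega(v,w)$ is evaluated by (i) and the cross term $\Omega(v-\tilde{\mathcal{T}}\lambda.v,\,w-\tilde{\mathcal{T}}\lambda.w)$ is shown to vanish. Note only that the operative reason for that vanishing is not the idempotence of $\tilde{\mathcal{T}}\lambda$ but exactly what the paper invokes: $pr_{1}\circ(\phi_{\gamma},\gamma)=Id_{E}$ makes $v-\tilde{\mathcal{T}}\lambda.v$ and $w-\tilde{\mathcal{T}}\lambda.w$ vertical, and the canonical symplectic section $\Omega$ vanishes on pairs of vertical elements --- a structural fact the paper asserts without proof and which your adapted-coordinate (or ``vertical kernel'') argument is the right way to justify.
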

\begin{proof}
(i) Since $ \Omega $ is canonical symplectic section of $ \mathcal{T}^{\tau^{\ast}} E, $ we know that there is a unique canonical one-section $ \Theta $ such that $ \Omega = -d \Theta. $ We know that $ (\phi_{\gamma} , \gamma )^{\ast} \Theta = \gamma, $ then we obtain that

\begin{center}
$  (\phi_{\gamma} , \gamma )^{\ast} \Omega (x , y) = (\phi_{\gamma} , \gamma )^{\ast} (-d\Theta ) (x,y) = - d ((\phi_{\gamma} , \gamma )^{\ast} \Theta) (x,y) = -d \gamma (x , y).  $
\end{center}

Note that $ \tilde{\lambda}^{\ast} := pr_{1} ^{\ast} \circ (\phi_{\gamma} , \gamma )^{\ast} : ( \mathcal{T}^{\tau^{\ast}} E )^{\ast} \longrightarrow ( \mathcal{T}^{\tau^{\ast}} E )^{\ast}, $ then we have
\begin{eqnarray*}
\tilde{\lambda}^\ast   \Omega (v,w)&=& \tilde { \lambda}^{\ast} (-d \Theta) (v,w)
\\&=&-d ( \tilde {\lambda}^{\ast} \Theta ) (v , w) \\&=&-d ( pr_{1} ^{\ast} \circ (\phi_{\gamma} , \gamma )^{\ast}  \Theta ) (v , w )\\&=& -d (pr_{1}^{\ast} \circ \gamma ) (v,w)
\\&=&-d \gamma ( pr_{1} (v) , pr_{1} (w) ).
\end{eqnarray*}
(ii) For any $ v , w \in \mathcal{T}^{\tau^{\ast}} E, ~~ v - \tilde{\mathcal{T}} \lambda.v $ is vertical, because

\begin{center}
$ pr_{1} ( \tilde{\mathcal{T}} \lambda.v ) = pr_{1} (v) - ( pr_{1} \circ (\phi_{\gamma}, \gamma ) \circ pr_{1} ) . v
= pr_{1} (v) - pr_{1} (v) = 0$
\end{center}
Where we have used relation $  pr_{1} \circ (\phi_{\gamma}, \gamma ) \circ pr_{1}  = pr_{1}. $ Thus $ \Omega ( v - \tilde{\mathcal{T}} \lambda.v , w- \tilde{\mathcal{T}} \lambda.w ) = 0, $ and hence

 \begin{center}
$ \Omega ( v- \tilde{\mathcal{T}} \lambda.v , w ) = \Omega ( v ,  w- \tilde{\mathcal{T}} \lambda.w  ) + \Omega ( v- \tilde{\mathcal{T}} \lambda.v , w- \tilde{\mathcal{T}} \lambda.w ). $
\end{center}

However the second term on the right hand is given by
\begin{eqnarray}
\label{eq-brack}
\Omega ( v- \tilde{\mathcal{T}} \lambda.v , w- \tilde{\mathcal{T}} \lambda.w ) &=& \tilde{\lambda}^{\ast} \Omega (v,w) =  pr_{1} ^{\ast} \circ (\phi_{\gamma} , \gamma )^{\ast} (v , w )     \nonumber \\
       & = &	  (\phi_{\gamma} , \gamma )^{\ast} \Omega (  pr_{1} (v) , pr_{1} (w)) = d \gamma (  pr_{1} (v) , pr_{1} (w)),   \nonumber
\end{eqnarray}
where we have used the asseration (i). It follows that
\begin{eqnarray}
\label{eq-brack}
 \Omega ( \tilde{\mathcal{T}} \lambda.v , w)  &=& \Omega ( ( (\phi_{\gamma}, \gamma ) \circ pr_{1} ).v , w )     \nonumber \\
       & = &	  \Omega ( v, w - (\phi_{\gamma}, \gamma ) \circ pr_{1} ). w ) - d \gamma ( pr_{1} (v) , pr_{1} (w) )     \nonumber \\
       &  =&  \Omega (v , w- \tilde{\mathcal{T}} \lambda.w ) - d \gamma ( pr_{1} (v) , pr_{1} (w) ).     \nonumber
\end{eqnarray}
So, the assertion (ii) holds.
\end{proof}

Now, for a given Hamiltonian system $ (E^{\ast} , \Omega , H ) $ on $ E $ over $ M $, by using the above Lemma, we can prove the following two types of geometric Hamilton-Jacobi theorem for Hamiltonian system. At first, we prove Type I of geometric Hamilton-Jacobi theorem for Hamiltonian system on Lie algebroids by using the fact that the section $ \gamma \in \Gamma (E^{\ast}) $ is cocycle with respect to $ pr_{1} : \mathcal{T}^{\tau^{\ast}} E \longrightarrow E. $\\
\begin{theorem}
\label{thm-Type I of HamiltonJacobi theorem}
For Hamilton system $ (E^{\ast} , \Omega , H ), $ assume that $ \gamma \in \Gamma (E^{\ast})  $ is a section of $ E^{\ast}, $ and $ \xi_{H} ^{\gamma} = pr_{1} \circ \xi \circ\gamma, $ where $ \xi_{H} $ is the corresponding Hamiltonian section. If section $ \gamma \in \Gamma (E^{\ast}) $ is cocycle with respect to $ pr_{1} : \mathcal{T}^{\tau^{\ast}} E \longrightarrow E, $ then $ \gamma $ is a solution of the equation $ (\phi_{\gamma}, \gamma ) \circ \xi_{H} ^{\gamma} = \xi_{H} \circ \gamma, $ which is called Type I of Hamilton-Jacobi equation for Hamiltonian system $ (E^{\ast} , \Omega , H ). $

$$\xymatrix@!=1cm{E^{\ast}\ar[r]^{\tau^{\ast}} & M \ar[r]^{\gamma}\ar[d]^{\xi_{H} ^{\gamma}} & E^{\ast} \ar[d]^{\xi_{H}} \\
 \mathcal{T}^{\tau^{\ast}} E & E \ar[l]^{(\phi_{\gamma} , \gamma)} &  \mathcal{T}^{\tau^{\ast}} E \ar[l]^{pr_{1}}}$$
\end{theorem}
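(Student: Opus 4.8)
The plan is to reduce the stated equality of maps $M \to \mathcal{T}^{\tau^{\ast}} E$ to a pointwise fixed-point identity for the operator $\tilde{\mathcal{T}} \lambda$, and then to pin down that identity using the nondegeneracy of the canonical symplectic section $\Omega$. First I would rewrite the left-hand side of the Type I equation: since $\xi_{H}^{\gamma} = pr_{1} \circ \xi_{H} \circ \gamma$ and $\tilde{\mathcal{T}} \lambda = (\phi_{\gamma}, \gamma) \circ pr_{1}$, the composite factors as $(\phi_{\gamma}, \gamma) \circ \xi_{H}^{\gamma} = \tilde{\mathcal{T}} \lambda \circ \xi_{H} \circ \gamma$. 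Both sides of the desired identity lie over $\gamma(x) \in E^{\ast}$ (the right-hand side by the definition of $\xi_{H}$, the left-hand side because $(\phi_{\gamma}, \gamma)$ covers $\gamma$), so it suffices to prove, for each $x \in M$ and $v := (\xi_{H} \circ \gamma)(x)$, the fiberwise identity $\tilde{\mathcal{T}} \lambda(v) = v$.

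Next I would exploit nondegeneracy: it is enough to show $\Omega(\tilde{\mathcal{T}} \lambda(v) - v, w) = 0$ for every $w \in \mathcal{T}^{\tau^{\ast}} E$ over $\gamma(x)$. Here the cocycle hypothesis enters directly. Applying assertion (ii) of Lemma \ref{lemma-functorial legendre},
$$\Omega(\tilde{\mathcal{T}} \lambda(v), w) = \Omega(v, w - \tilde{\mathcal{T}} \lambda(w)) - d\gamma(pr_{1}(v), pr_{1}(w)),$$
and since $\gamma$ is a cocycle with respect to $pr_{1}$ in the sense of Definition \ref{dfn-second order}, the last term vanishes. Subtracting $\Omega(v, w)$ then yields the compact relation
$$\Omega(\tilde{\mathcal{T}} \lambda(v) - v, w) = -\,\Omega(v, \tilde{\mathcal{T}} \lambda(w)).$$
I would also record, as in the proof of Lemma \ref{lemma-functorial legendre}(ii), that $v - \tilde{\mathcal{T}} \lambda(v)$ is $pr_{1}$-vertical; consequently the pairing above is automatically zero whenever $w$ is itself $pr_{1}$-vertical (then $\tilde{\mathcal{T}} \lambda(w) = 0$), so that the entire content of the theorem sits in the non-vertical test directions.

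The main obstacle is therefore the final step: showing that the residual pairing $\Omega(v, \tilde{\mathcal{T}} \lambda(w))$ vanishes for all $w$. Using $i_{\xi_{H}} \Omega = dH$ together with $v = (\xi_{H} \circ \gamma)(x)$, this pairing equals $dH(\tilde{\mathcal{T}} \lambda(w)) = dH\big((\phi_{\gamma}, \gamma)(pr_{1}(w))\big)$; unwinding the definition $(\phi_{\gamma}, \gamma)(a) = (a, (T_{x}\gamma)(\rho(a)))$ and the fact that the anchor of $\mathcal{T}^{\tau^{\ast}} E$ is the projection onto the $TE^{\ast}$-factor, this is precisely $d(H \circ \gamma)$ evaluated on $pr_{1}(w) \in E$. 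Thus the nondegeneracy argument closes exactly when this evaluation is zero, and I expect the delicate point to be extracting the vanishing of $d(H \circ \gamma)$ along the $E$-directions picked out by $pr_{1}$ from the cocycle hypothesis and the defining relation $i_{\xi_{H}} \Omega = dH$ restricted along the image of $\gamma$. I would attack this by testing the relation $\Omega(\tilde{\mathcal{T}} \lambda(v) - v, w) = -\Omega(v, \tilde{\mathcal{T}} \lambda(w))$ against a local frame adapted to the splitting of $\mathcal{T}^{\tau^{\ast}} E$ into the image of $(\phi_{\gamma}, \gamma)$ and the vertical bundle, reducing the statement to a finite system in the structure functions $c_{ij}^{k}$, $a_{ij}$ and the partial derivatives of $H$; this local computation is where I expect the real work --- and the only genuine subtlety --- to lie.
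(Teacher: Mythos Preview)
Your reduction is exactly the one the paper carries out: rewrite $(\phi_\gamma,\gamma)\circ\xi_H^\gamma$ as $\tilde{\mathcal T}\lambda\cdot v$ with $v=\xi_H\circ\gamma$, apply Lemma~\ref{lemma-functorial legendre}(ii), kill the $d\gamma$ term with the cocycle hypothesis, and arrive at
\[
\Omega\bigl(\tilde{\mathcal T}\lambda\cdot v - v,\,w\bigr)\;=\;-\,\Omega\bigl(v,\tilde{\mathcal T}\lambda\cdot w\bigr)\;=\;-\,d(H\circ\gamma)\bigl(pr_1(w)\bigr).
\]
So far you and the paper agree step for step.

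The gap you flag at the end is genuine, and it is not a technicality that a local-frame computation will close. The quantity $d(H\circ\gamma)$ is an independent datum: the cocycle condition $d\gamma=0$ says nothing about how $H$ varies along the image of $\gamma$. In the tangent-bundle case $E=TQ$, take $H(q,p)=\tfrac12 p^2+V(q)$ and $\gamma$ the zero section; then $d\gamma=0$ trivially, yet $X_H\circ\gamma=-V'(q)\,\partial_p$ is vertical and nonzero whenever $V$ is nonconstant, while $T\gamma\circ X_H^\gamma=0$. So the Type~I equation fails even though $\gamma$ is a cocycle. No amount of juggling the structure functions $c_{ij}^k$, $a_{ij}$ will manufacture $d(H\circ\gamma)=0$ out of $d\gamma=0$ alone.

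The paper's own argument does not escape this: at the crucial step it \emph{assumes} $(\phi_\gamma,\gamma)\circ\xi_H^\gamma=\xi_H\circ\gamma$ in order to replace $\Omega(\xi_H\circ\gamma,\tilde{\mathcal T}\lambda\cdot w)$ by $\Omega(\tilde{\mathcal T}\lambda\cdot(\xi_H\circ\gamma),\tilde{\mathcal T}\lambda\cdot w)=\tilde\lambda^*\Omega(\xi_H\circ\gamma,w)$, which then vanishes by the cocycle hypothesis and Lemma~\ref{lemma-functorial legendre}(i). That substitution is exactly the conclusion one is trying to prove, so the proof is circular. What the computation actually establishes (and what Theorem~\ref{thm-variational groupoid} already says) is the \emph{equivalence}: under $d\gamma=0$, the Type~I equation holds if and only if $d(H\circ\gamma)=0$. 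Your instinct that ``the only genuine subtlety'' lives in that last step is correct---but it is not a subtlety, it is a missing hypothesis.
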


\begin{proof}
Let $ v = \xi_{H} \circ \gamma \in \mathcal{T}^{\tau^{\ast}} E$ and for any $ w \in \mathcal{T}^{\tau^{\ast}} E, ~~pr_{1}(w) \neq 0, $ from Lemma (ii) we have that
\begin{eqnarray}
\label{eq-brack}
  \Omega ( (\phi_{\gamma}, \gamma ) \circ \xi_{H} ^{\gamma} , w )  &=  &	 \Omega ( (\phi_{\gamma}, \gamma ) \circ \xi_{H} \circ \gamma , w )      \nonumber \\
       & = &	 \Omega ( \xi_{H}  \circ \gamma , w)-  (\phi_{\gamma}, \gamma ) \circ pr_{1} ). w ) - d\gamma ( pr_{1} (\xi_{H} \circ \gamma , pr_{1} (w) )      \nonumber \\
       &  =& \Omega ( \xi_{H}  \circ \gamma , w ) -  \Omega ( \xi_{H}  \circ \gamma , \tilde{\mathcal{T}} \lambda.w ) - d \gamma ( pr_{1}(v) , pr_{1}(w) ).      \nonumber
\end{eqnarray}

Because the section $ \gamma \in \Gamma (E^{\ast}) $ is cocycle with respect to $  pr_{1} : \mathcal{T}^{\tau^{\ast}} E \longrightarrow E, $ then we have that $ d\gamma ( pr_{1} (\xi_{H} \circ \gamma , pr_{1} (w) ) =0, $ so
\begin{eqnarray}
\label{eq-brack}
  \Omega ( (\phi_{\gamma}, \gamma ) \circ \xi_{H} ^{\gamma} , w )  -	 \Omega ( \xi_{H} \circ \gamma , w ) = \Omega ( \xi_{H}  \circ \gamma , \tilde{\mathcal{T}} \lambda.w ).
\end{eqnarray}
If $ \gamma $ satisfies the equation $ (\phi_{\gamma}, \gamma ) \circ \xi_{H} ^{\gamma} = \xi_{H} \circ \gamma, $ from Lemma (i) we can obtain that
\begin{eqnarray}
\label{eq-brack}
  -\Omega (  \xi_{H}  \circ \gamma , \tilde{\mathcal{T}} \lambda.w  ) &=  &	 \Omega ( (\phi_{\gamma}, \gamma ) \circ \xi_{H} ^{\gamma} , \tilde{\mathcal{T}} \lambda.w )      \nonumber \\
   & = &	 -\Omega (  (\phi_{\gamma}, \gamma ) \circ pr_{1} \circ \xi_{H} \circ \gamma , \tilde{\mathcal{T}} \lambda.w )      \nonumber \\
       &  =& -\Omega ( \tilde{\mathcal{T}} \lambda \circ \xi_{H}  \circ \gamma , \tilde{\mathcal{T}} \lambda.w )       \nonumber  \\
       &=&  \tilde{\lambda}^{\ast} \Omega ( \xi_{H}  \circ \gamma , w )   \nonumber   \\
       &=&   d\gamma ( pr_{1} (\xi_{H} \circ \gamma ) , pr_{1} (w) )  \nonumber \\
       &=& 0 \nonumber
\end{eqnarray}
But, since the symplectic section $ \Omega $ is non-degenerate, the left side of (1) equals zero, only when $ \Omega $ satisfies $ (\phi_{\gamma}, \gamma ) \circ \xi_{H} ^{\gamma} = \xi_{H} \circ \gamma, $ thus the section $ \gamma \in \Gamma (E^{\ast}) $ is cocycle with respect to $ pr_{1} : \mathcal{T}^{\tau^{\ast}} E \longrightarrow E, $ then $ \gamma $ must be a solution of Type I of Hamilton-Jacobi equation $ (\phi_{\gamma}, \gamma ) \circ \xi_{H} ^{\gamma} = \xi_{H} \circ \gamma. $
\end{proof}

In the following, we show that for every symplectic morphism $ \varepsilon : E^{\ast} \longrightarrow E^{\ast} $, we can prove the following Type II of geometric Hamilton-Jacobi theorem for the Hamiltonian system on Lie algebroids.

\begin{theorem}
\label{thm-Type II of HamiltonJacobi theorem}
For Hamilton system $ (E^{\ast} , \Omega , H ), $ assume that $ \gamma \in \Gamma (E^{\ast})  $ is a section of $ E^{\ast}, $ and $ \tilde{\mathcal{T}} \lambda := ( \mathcal{T} \lambda \circ \gamma) \circ pr_{1} = (\phi_{\gamma} , \gamma ) \circ pr_{1} : \mathcal{T}^{\tau^{\ast}} E \longrightarrow \mathcal{T}^{\tau^{\ast}} E , $ and for every symplectic morphism $ \varepsilon : E^{\ast} \longrightarrow E^{\ast}, $ denote by $ \xi_{H} ^{\gamma} = pr_{1} \circ \xi \circ\gamma, $ where $ \xi_{H} $ is the corresponding Hamiltonian section. Then $ \varepsilon $ is a solution of the equation $ \mathcal{T} \varepsilon \circ \xi_{H} = \tilde{\mathcal{T}} \lambda \circ\xi_{H} \circ \varepsilon, $ if and only if it is a solution of the equation $ (\phi_{\gamma}, \gamma ) \circ \xi_{H} ^{\varepsilon} = \xi_{H} \circ \varepsilon, $ where $ \xi_{H \circ\varepsilon} \in \Gamma ( \mathcal{T}^{\tau^{\ast}} E ) $ is Hamiltonian section of the function $ H \circ \varepsilon : E^{\ast} \longrightarrow \mathbb{R}. $The equation $ (\phi_{\gamma}, \gamma ) \circ \xi_{H} ^{\varepsilon} = \xi_{H} \circ \varepsilon, $ is called Type II of Hamilton-Jacobi equation for Hamiltonian system $ (E^{\ast} , \Omega , H ). $

$$\xymatrix@!=1cm{ E^{\ast} \ar[r] & E^{\ast}\ar[r]^{\tau^{\ast}} \ar[d]^{\xi_{H\circ \varepsilon}} \ar[dr]^{\xi_{H} ^{\varepsilon}} & M \ar[r]^{\gamma} & E^{\ast} \ar[d]^{\xi_{H}}\\
 \mathcal{T}^{\tau^{\ast}} E & \mathcal{T}^{\tau^{\ast}} E \ar[l]^{\mathcal{T} \varepsilon} & E \ar[l]^{(\phi_{\gamma} , \gamma)} &  \mathcal{T}^{\tau^{\ast}} E \ar[l]^{pr_{1}}}$$

\end{theorem}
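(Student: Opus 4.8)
The plan is to run the same argument that establishes Theorem~\ref{thm-Type I of HamiltonJacobi theorem}, but fed with the extra structure carried by the symplectic morphism $\varepsilon$. The first ingredient I would record is the standard transport of Hamiltonian sections under a symplectic morphism: since $\varepsilon : E^{\ast} \to E^{\ast}$ is symplectic, $\mathcal{T}\varepsilon$ is a Lie algebroid morphism of $\mathcal{T}^{\tau^{\ast}}E$ with $(\mathcal{T}\varepsilon)^{\ast}\Omega = \Omega$, and combining $i_{\xi_{H\circ\varepsilon}}\Omega = d(H\circ\varepsilon) = (\mathcal{T}\varepsilon)^{\ast} dH$ with the non-degeneracy of $\Omega$ (Proposition~\ref{prop-poissonreduction}) gives $\mathcal{T}\varepsilon \circ \xi_{H\circ\varepsilon} = \xi_{H} \circ \varepsilon$. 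This identifies the ``top triangle'' of the diagram and lets me pass freely between $\xi_{H\circ\varepsilon}$ and $\xi_{H}\circ\varepsilon$ inside any pairing against $\Omega$.

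Next, I would fix $v = \xi_{H}\circ\varepsilon \in \mathcal{T}^{\tau^{\ast}}E$ and an arbitrary $w \in \mathcal{T}^{\tau^{\ast}}E$ with $pr_{1}(w)\neq 0$, and evaluate $\Omega$ on the two sides of the candidate equation $(\phi_{\gamma},\gamma)\circ \xi_{H}^{\varepsilon} = \xi_{H}\circ\varepsilon$ after applying the relevant maps. Using Lemma (ii) exactly as in the proof of Theorem~\ref{thm-Type I of HamiltonJacobi theorem}, namely $\Omega(\tilde{\mathcal{T}}\lambda.v, w) = \Omega(v, w - \tilde{\mathcal{T}}\lambda.w) - d\gamma(pr_{1}(v), pr_{1}(w))$, together with the identity $pr_{1}\circ(\phi_{\gamma},\gamma)\circ pr_{1} = pr_{1}$, the difference $\Omega((\phi_{\gamma},\gamma)\circ\xi_{H}^{\varepsilon}, w) - \Omega(\xi_{H}\circ\varepsilon, w)$ collapses to $\Omega(\xi_{H}\circ\varepsilon, \tilde{\mathcal{T}}\lambda.w)$ up to a $d\gamma$-correction. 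Then, invoking Lemma (i) in the forms $\tilde{\lambda}^{\ast}\Omega(v,w) = -d\gamma(pr_{1}(v), pr_{1}(w))$ and $(\phi_{\gamma},\gamma)^{\ast}\Omega(x,y) = -d\gamma(x,y)$, I would show that this very same quantity $\Omega(\xi_{H}\circ\varepsilon, \tilde{\mathcal{T}}\lambda.w)$ measures the failure of $\mathcal{T}\varepsilon\circ\xi_{H\circ\varepsilon} = \tilde{\mathcal{T}}\lambda\circ\xi_{H}\circ\varepsilon$, using the $\varepsilon$-relatedness from the first step to substitute $\xi_{H}\circ\varepsilon$ for $\mathcal{T}\varepsilon\circ\xi_{H\circ\varepsilon}$ wherever it appears inside the pairing.

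The endgame is again the non-degeneracy of $\Omega$: each of the two equations in the statement is, after pairing with all $w$, equivalent to the vanishing of one and the same expression $\Omega(\xi_{H}\circ\varepsilon, \tilde{\mathcal{T}}\lambda.w)$ for every $w$, so the two equations are equivalent to each other. I would present both implications by chasing the displayed chain of equalities forwards and backwards, mirroring the block computation used for Type I.

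The step I expect to be the real obstacle is the bookkeeping of the three compositions $\tilde{\mathcal{T}}\lambda = (\phi_{\gamma},\gamma)\circ pr_{1}$, $\mathcal{T}\varepsilon$, and $\xi_{H}$: one must make sure the $d\gamma$-correction terms produced by Lemma (ii) on the $(\phi_{\gamma},\gamma)\circ\xi_{H}^{\varepsilon}$ side cancel against those produced on the $\mathcal{T}\varepsilon\circ\xi_{H\circ\varepsilon}$ side, and one must be careful that $(\mathcal{T}\varepsilon)^{\ast}\Omega = \Omega$ is genuinely available for a ``symplectic morphism'' in this prolonged-algebroid sense — this is exactly where the hypothesis on $\varepsilon$ is used. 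Once the $\varepsilon$-relatedness of $\xi_{H\circ\varepsilon}$ and $\xi_{H}$ and Lemma (i)--(ii) are in hand, the remainder is the same non-degeneracy argument as in Theorem~\ref{thm-Type I of HamiltonJacobi theorem}.
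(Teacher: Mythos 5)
Your overall route is the paper's own: apply Lemma 7 (i)--(ii) with $v=\xi_{H}\circ\varepsilon$, expand $\Omega((\phi_{\gamma},\gamma)\circ\xi_{H}^{\varepsilon},w)-\Omega(\xi_{H}\circ\varepsilon,w)$, and finish with non-degeneracy of $\Omega$. But two points diverge from the paper, and one of them is a genuine problem for the plan as written. The relatedness you derive, $\mathcal{T}\varepsilon\circ\xi_{H\circ\varepsilon}=\xi_{H}\circ\varepsilon$, is the standard one, but the paper's proof (and the literal equation in the statement) uses $\xi_{H}\circ\varepsilon=\mathcal{T}\varepsilon\circ\xi_{H}$. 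Note that with your version the equation $\mathcal{T}\varepsilon\circ\xi_{H\circ\varepsilon}=\tilde{\mathcal{T}}\lambda\circ\xi_{H}\circ\varepsilon$ is literally identical to $(\phi_{\gamma},\gamma)\circ\xi_{H}^{\varepsilon}=\xi_{H}\circ\varepsilon$, since $\tilde{\mathcal{T}}\lambda\circ\xi_{H}\circ\varepsilon=(\phi_{\gamma},\gamma)\circ pr_{1}\circ\xi_{H}\circ\varepsilon=(\phi_{\gamma},\gamma)\circ\xi_{H}^{\varepsilon}$; so under your substitution the ``equivalence'' is vacuous, and to prove the equivalence actually stated (with $\mathcal{T}\varepsilon\circ\xi_{H}$ on the left) you must use the relation the paper invokes, not the one you derived.

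The concrete gap is in your endgame: the claim that each of the two equations is equivalent to the vanishing of the single expression $\Omega(\xi_{H}\circ\varepsilon,\tilde{\mathcal{T}}\lambda.w)$ for all $w$ is not what Lemma 7 delivers. In Type II there is no cocycle hypothesis on $\gamma$, so the correction term $d\gamma(pr_{1}(v),pr_{1}(w))$ does not vanish and there is nothing for it to ``cancel'' against; Lemma 7 (i) converts it into $-\tilde{\lambda}^{\ast}\Omega(v,w)=-\Omega(\tilde{\mathcal{T}}\lambda.v,\tilde{\mathcal{T}}\lambda.w)$, and what the computation actually yields is $\Omega((\phi_{\gamma},\gamma)\circ\xi_{H}^{\varepsilon},w)-\Omega(\xi_{H}\circ\varepsilon,w)=-\Omega(\xi_{H}\circ\varepsilon,\tilde{\mathcal{T}}\lambda.w)+\Omega(\tilde{\mathcal{T}}\lambda\circ\xi_{H}\circ\varepsilon,\tilde{\mathcal{T}}\lambda.w)$. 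The paper then substitutes $\xi_{H}\circ\varepsilon=\mathcal{T}\varepsilon\circ\xi_{H}$ in the first term and ends with the single pairing $\Omega(\tilde{\mathcal{T}}\lambda\circ\xi_{H}\circ\varepsilon-\mathcal{T}\varepsilon\circ\xi_{H},\tilde{\mathcal{T}}\lambda.w)$, to which non-degeneracy is applied. So you should carry the $\tilde{\lambda}^{\ast}\Omega$ term explicitly and conclude from this last pairing, as the paper does; be aware (a weakness you would then share with the paper) that in the direction recovering $\mathcal{T}\varepsilon\circ\xi_{H}=\tilde{\mathcal{T}}\lambda\circ\xi_{H}\circ\varepsilon$ one is testing only against vectors of the form $\tilde{\mathcal{T}}\lambda.w$, i.e.\ in the image of $(\phi_{\gamma},\gamma)$, so plain non-degeneracy of $\Omega$ on all of $\mathcal{T}^{\tau^{\ast}}E$ is not by itself sufficient there.
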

 \begin{proof}

Let $ v = \xi_{H} \circ \varepsilon \in \mathcal{T}^{\tau^{\ast}} E$ and for any $ w \in \mathcal{T}^{\tau^{\ast}} E, ~~\tilde{\mathcal{T}} \lambda(w) \neq 0, $ from Lemma  we have that
\begin{eqnarray}
\label{eq-brack}
  \Omega ( (\phi_{\gamma}, \gamma ) \circ \xi_{H} ^{\varepsilon} , w )  &=  &	 \Omega ( (\phi_{\gamma}, \gamma ) \circ pr_{1} \circ \xi_{H} \circ \varepsilon , w )      \nonumber \\
       & = &	 \Omega ( \xi_{H}  \circ \varepsilon , w - ( (\phi_{\gamma}, \gamma ) \circ pr_{1} ). w ) - d\gamma ( pr_{1} (\xi_{H} \circ \varepsilon ) , pr_{1} (w) )      \nonumber \\
       &  =& \Omega ( \xi_{H}  \circ \varepsilon , w ) -  \Omega ( \xi_{H}  \circ \varepsilon , \tilde{\mathcal{T}} \lambda.w ) +  \tilde{\lambda}^{\ast} \Omega ( \xi_{H}  \circ \varepsilon , w )      \nonumber   \\
         &=&  \Omega ( \xi_{H}  \circ \varepsilon, w ) -  \Omega ( \xi_{H}  \circ \varepsilon,  \tilde{\mathcal{T}} \lambda.w) + \Omega ( \tilde{\mathcal{T}} \lambda \circ \xi_{H}  \circ \varepsilon , \tilde{\mathcal{T}} \lambda\circ \varepsilon ).    \nonumber
\end{eqnarray}

Because that $ \varepsilon : E^{\ast} \longrightarrow E^{\ast} $ is symplectic, ana so $ \xi_{H} \circ \varepsilon = \mathcal{T} \varepsilon \circ \xi_{H}. $ From the reasoning above, we conclude that
\begin{eqnarray*}
\Omega ( ( \phi_{\gamma}, \gamma ) \circ \xi_{H} ^{\gamma} , w )-\Omega ( \xi_{H} \circ \gamma , w )
&=& - \Omega ( \mathcal{T} \varepsilon \circ \xi_{H} , \tilde{\mathcal{T}} \lambda.w ) + \Omega ( \tilde{\mathcal{T}} \lambda \circ \xi_{H} \circ \varepsilon , \tilde{\mathcal{T}} \lambda.w )\\
&=& \Omega ( \tilde{\mathcal{T}} \lambda \circ \xi_{H}  \circ \gamma - \mathcal{T} \varepsilon \circ \xi_{H}  , \tilde{\mathcal{T}} \lambda.w ).
\end{eqnarray*}
Since the symplectic section $ \Omega $ is non-degenerate, it follows that $ (\phi_{\gamma}, \gamma ) \circ \xi_{H} ^{\varepsilon} = \xi_{H} \circ \varepsilon $ is equivalent to $ \mathcal{T} \varepsilon \circ \xi_{H} = \tilde{\mathcal{T}} \lambda \circ\xi_{H} \circ \varepsilon, $ if and only if it is a solution of Type II of Hamilton-Jacobi equation $ (\phi_{\gamma}, \gamma ) \circ \xi_{H} ^{\varepsilon} = \xi_{H} \circ \varepsilon. $

\end{proof}

\subsection{Examples}
\label{subsec-secondexamples}
(i) If $ E $ be the tangent lie algebroid $ TQ,$ then the Lie algebroid  $ (\mathcal{T}^{\tau^{\ast}} E, [,] , \rho) $ is the standard Lie algebroid $  (T( T^{\ast}Q) , [,], Id).$ If assume that $ \gamma: Q \longrightarrow T^{\ast}Q $ be a one-form on $ Q, $ and $ X_{H} ^{\gamma} = T \pi_{Q} \circ X_{H} \circ \gamma $ where $ X_{H} $ is dynamical vector field of Hamiltonian system $ (T^{\ast} Q , \omega , H ).$ If $ \gamma $ be closed with respect to $ T \pi_{Q} : T (T^{\ast} Q) \longrightarrow TQ,$ then $ \gamma $ is the solution of Type I of Hamilton-Jacobi equation  $ T\gamma \circ X_{H} ^{\gamma} = X_{H} \circ \gamma.$ \\

$$\xymatrix@!=2cm{T^{\ast}Q \ar[r]^{\pi_{Q}} & Q \ar[r]^{\gamma}\ar[d]^{X_{H} ^{\gamma}} & T^{\ast}Q \ar[d]^{X_{H}}\\
T (T^{\ast} Q) & TQ \ar[l]^{T\gamma} &  T (T^{\ast} Q) \ar[l]^{T\pi_{Q} }}$$\\

For any symplectic map $ \varepsilon : T^{\ast} Q \longrightarrow T^{\ast} Q, ~~\varepsilon $ is a solution of the equation $ T \varepsilon. X_{H\circ\varepsilon} = T \lambda \circ X_{H} \circ \varepsilon $ if and only if it is a solution of Type II of Hamilton-Jacobi equation $ T\gamma \circ X_{H} ^{\varepsilon} = X_{H} \circ \varepsilon. $

$$\xymatrix@!=2cm{ T^{\ast}Q \ar[r] & T^{\ast}Q \ar[r]^{\pi_{Q}} \ar[d]^{X_{H\circ \varepsilon}} \ar[dr]^{X_{H} ^{\varepsilon}} & Q \ar[r]^{\gamma} & T^{\ast}Q \ar[d]^{X_{H}}\\
 T (T^{\ast} Q) & T (T^{\ast} Q) \ar[l]^{T \varepsilon} & T Q \ar[l]^{T\gamma} &  T (T^{\ast} Q) \ar[l]^{T\pi_{Q}}}$$\\

(ii) Let $ ( E= TQ/G , [,], \rho), ~~ \tau_{q\vert G} : TQ/G \longrightarrow Q/G, $ be gauge algebroid associated with the principle bundle $ \pi: Q \longrightarrow M=Q/G. $ Since $ \mathcal{T}^{(\tau_{Q\vert G} )^{\ast}} (TQ/G) \cong T (T^{\ast} Q/G) $ and $  (\mathcal{T}^{(\tau_{Q\vert G} )^{\ast}} (TQ/G))^{\ast} \cong T^{\ast} (T^{\ast} Q/G) $,then if assume that $ \gamma_{Q \vert  G}: Q/G \longrightarrow T^{\ast}Q/G $ be a one-form on $ Q/G, $ and $ X_{H_{Q \vert G}} ^{\gamma_{Q \vert G}} = T \pi_{Q \vert G} \circ X_{H_{Q \vert G}} \circ \gamma_{Q \vert G} $ where $ X_{H_{Q \vert G}} $ is dynamical vector field of Hamiltonian system $ (T^{\ast} Q/G , \omega , H_{Q \vert G} ).$ If $ \gamma_{Q \vert G} $ be closed with respect to $ T \pi_{Q} : T T^{\ast} Q/G \longrightarrow TQ/G,$ then $ \gamma_{Q \vert G} $ is the solution of Type I of Hamilton-Jacobi equation  $ T\gamma_{Q \vert G} \circ X_{H_{Q \vert G}} ^{\gamma_{Q \vert G}} = X_{H_{Q \vert G}} \circ \gamma_{Q \vert G}.$ \\

$$\xymatrix@!=2cm{T^{\ast}Q/G \ar[r]^{\pi_{Q \vert G}} & Q/G \ar[r]^{\gamma_{Q \vert G}}\ar[d]^{X_{H_{Q \vert G}} ^{\gamma_{Q \vert G}}} & T^{\ast}Q/G \ar[d]^{X_{H_{Q \vert G}}}\\
T (T^{\ast} Q/G) & TQ/G \ar[l]^{T\gamma_{Q \vert G}} &  T (T^{\ast} Q/G) \ar[l]^{T\pi_{Q \vert G} }}$$\\

 For any symplectic map $ \varepsilon : T^{\ast} Q/G \longrightarrow T^{\ast} Q/G, ~~\varepsilon $ is a solution of the equation $ T \varepsilon. X_{H_{Q \vert G} \circ \varepsilon} = T \lambda \circ X_{H_{Q \vert G}} \circ \varepsilon $ if and only if it is a solution of Type II of Hamilton-Jacobi equation $ T\gamma \circ X_{H_{Q \vert G}} ^{\varepsilon} = X_{H_{Q \vert G}} \circ \varepsilon. $\\

$$\xymatrix@!=2cm{ T^{\ast}Q/G \ar[r] & T^{\ast}Q/G \ar[r]^{\pi_{Q \vert G}} \ar[d]^{X_{H_{Q \vert G} \circ \varepsilon}} \ar[dr]^{X_{H_{Q \vert G}} ^{\varepsilon}} & Q/G \ar[r]^{\gamma_{Q \vert G}} & T^{\ast}Q/G \ar[d]^{X_{H_{Q \vert G}}}\\
 T (T^{\ast} Q/G) & T (T^{\ast} Q/G) \ar[l]^{T \varepsilon} & T Q/G \ar[l]^{T\gamma_{Q \vert G}} &  T (T^{\ast} Q/G) \ar[l]^{T\pi_{Q \vert G}}}$$

\section{Time-dependent Hamilton-Jacobi theorem of Hamiltonian system on Lie algebroid}

\label{subsec-Time dependent}
In the previous section we have introduced Type I and II of Hamilton-Jacobi theory for Hamiltonnian system $ (E^{\ast} , H, \Omega ). $ In this section we want to obtain these two Types of theory when they are dependent on time.
\\To achieve this goal, we need to introduce the extended-formalism. This formalism, provides away to deal with the space $ \mathbb{R} \times E^{\ast}, $ introduced to deal with the time-dependent Hamiltonians.  The solution is carry the dynamics to $  (\mathbb{R} \times E)^{\ast}. $

On $ \mathbb{R} \times E $ we consider coordinates given by the product of the global coordinates on $ \mathbb{R}, ~t, $ and coordinate $ (x , \lambda ) $ on $ E. $  Then $ (\mathbb{R} \times E)^{\ast}$ can be endowed with natural coordinates $ t, e, x, \mu $, where $ e $ is the t-conjugated momentum. There is a natural map $ \pi: ( \mathbb{R} \times E)^{\ast} \longrightarrow \mathbb{R} \times E^{\ast} $ given in local coordinates by $ \pi (t , e , x , \mu) = ( t , x , \mu). $\\

Suppose that $( \mathbb{R} \times E , [,] , \rho_{\mathbb{R}_{t}} ) $ be a Lie algebroid over manifold $ \mathbb{R} \times M $, where $ \rho_{ \mathbb{R}_{t}}(a) = \rho_{ \mathbb{R}} (t,a) := ( t, \rho (a) ) $ and $ \gamma_{\mathbb{R}_{t}}(x) = \gamma_{\mathbb{R}} (t , x) :=( t , \gamma (x)), $ where $ \gamma \in \Gamma (E^{\ast}) $ and $ \gamma_{\mathbb{R}} \in \Gamma (\mathbb{R} \times E^{\ast} ), $ be a section of $  E^{\ast} $. Consider the morphism $ (\phi_{\gamma_{\mathbb{R}_{t}}} , \gamma_{\mathbb{R}_{t}} ):= ( ( Id , T\gamma_{\mathbb{R}_{t}} \circ \rho_{\mathbb{R}_{t}} ) , \gamma_{\mathbb{R}_{t}} ) $ between the vector bundles $  E $ and $ \mathcal{T}^{\tau ^{\ast}}  E  $ defined by $ ( Id, T\gamma_{\mathbb{R}_{t}} \circ \rho_{\mathbb{R}_{t}} ) , \gamma_{\mathbb{R}_{t}} )( a) = ( a , (T_{x} \gamma_{\mathbb{R}_{t}} ) (\rho_{\mathbb{R}_{t}}(a))$, for $ ( a) \in E_{x} $ and $ x \in  M. $\\

At first, by using the symbols described above, we express theorem 5 as time-dependent as follows.\\
\begin{theorem}
\label{thm-variational groupoid}
Let $ E $ be a Lie algebroid over manifold $ M $ and $ H: E^{\ast} \longrightarrow \mathbb{R} $ be a Hamiltoniam function. Consider Lie algebroid structure on $ \mathcal{T}^{\tau^{\ast}} E $ and let $ \xi_{H} \in \Gamma (\mathcal{T}^{\tau^{\ast}} E ) $ be the corresponding Hamiltonian section. Let $ \gamma_{\mathbb{R}_{t}} \in \Gamma(E^{\ast}) $ be a cocycle, $ d \gamma_{\mathbb{R}_{t}} = 0, $ and denote by $ \xi_{H} ^{\gamma_{\mathbb{R}_{t}}} \in \Gamma (E)$ the section $ \xi_{H} ^{\gamma} = pr_{1} \circ \xi \circ\gamma_{\mathbb{R}_{t}}. $ Then the following condition are equivalent:\\
(i) For any curve $ t \longrightarrow \sigma (t) $ in $ M $ satisfying
\begin{center}
$ \rho (a) ( \sigma (t) ) = \dot{\sigma} (t) $ ~~~~~~ for all $ t $
\end{center}
the curve $ t \longrightarrow \gamma_{\mathbb{R}_{t}} ( \sigma (t) ) $ on $ E^{\ast} $ satisfies the Hamilton equation for $ H. $ \\
(ii) $ \gamma_{\mathbb{R}_{t}} $ satisfies the Hamilton-Jacobi equation.
\end{theorem}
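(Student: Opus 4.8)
The plan is to deduce the time-dependent statement from the autonomous Hamilton--Jacobi theorem recalled above (the De Leon--Marrero--Martinez theorem) by treating the extended object $\mathbb{R}\times E$ as a Lie algebroid in its own right. First I would make the extended formalism precise: $\mathbb{R}\times E\to\mathbb{R}\times M$ is a Lie algebroid whose sections are $t$-dependent sections of $E$ together with the $\partial/\partial t$ direction, whose bracket is induced from $[~,~]$ on $E$, and whose anchor is $\rho_{\mathbb{R}_t}$ extended by $\partial/\partial t\mapsto\partial/\partial t$. Its dual carries the natural coordinates $(t,e,x,\mu)$ introduced at the start of this section, and one forms the prolongation $\mathcal{T}^{\tau^{\ast}}(\mathbb{R}\times E)$ over $(\mathbb{R}\times E)^{\ast}$ together with its canonical Liouville and symplectic sections $\Theta_{\mathbb{R}}$ and $\Omega_{\mathbb{R}}=-d\Theta_{\mathbb{R}}$, exactly as in Section~2.3.

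Next I would introduce the extended Hamiltonian $\bar H:(\mathbb{R}\times E)^{\ast}\to\mathbb{R}$, $\bar H(t,e,x,\mu)=e+H(t,x,\mu)$, whose Hamiltonian section $\xi_{\bar H}\in\Gamma(\mathcal{T}^{\tau^{\ast}}(\mathbb{R}\times E))$, after projection along the map $\pi:(\mathbb{R}\times E)^{\ast}\to\mathbb{R}\times E^{\ast}$, governs precisely the time-dependent Hamilton equations for $H$ on $E^{\ast}$. The section $\gamma_{\mathbb{R}_t}$ of $\mathbb{R}\times E^{\ast}$ lifts to a section $\bar\gamma$ of $(\mathbb{R}\times E)^{\ast}$ by a suitable choice of the $e$-component, so that $\bar H\circ\bar\gamma$ records $H\circ\gamma_{\mathbb{R}_t}$, and the hypothesis $d\gamma_{\mathbb{R}_t}=0$ becomes the cocycle condition $d\bar\gamma=0$ needed to invoke the autonomous theorem. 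In the same step I would record the extended analog of the morphism property of $(\phi_{\gamma_{\mathbb{R}_t}},\gamma_{\mathbb{R}_t})$, namely $(\phi_{\gamma_{\mathbb{R}_t}},\gamma_{\mathbb{R}_t})^{\ast}\Theta_{\mathbb{R}}=\gamma_{\mathbb{R}_t}$ and $(\phi_{\gamma_{\mathbb{R}_t}},\gamma_{\mathbb{R}_t})^{\ast}\Omega_{\mathbb{R}}=-d\gamma_{\mathbb{R}_t}$, which follows by the computation in Lemma~\ref{lemma-functorial legendre}(i) applied to $\mathbb{R}\times E$.

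With this dictionary in place the equivalence is essentially a transcription. Applying the autonomous theorem to the data $(\mathbb{R}\times E,\bar H,\bar\gamma)$, its condition (i) --- that for every curve $\bar\sigma$ in $\mathbb{R}\times M$ integrating $\rho(\xi_{\bar H}^{\bar\gamma})$ the curve $\bar\gamma\circ\bar\sigma$ solves Hamilton's equation for $\bar H$ --- unpacks through $\pi$ and the explicit form of $\bar H$ into condition (i) of the present theorem: the $\partial/\partial t$-component forces the base parameter to be an affine reparametrization of $t$, while the remaining components are exactly the time-dependent Hamilton equations for $H$ along $t\mapsto\gamma_{\mathbb{R}_t}(\sigma(t))$. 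Likewise its condition (ii), $d(\bar H\circ\bar\gamma)=0$, becomes the time-dependent Hamilton--Jacobi equation for $\gamma_{\mathbb{R}_t}$. Alternatively, and more in the spirit of Section~3, one can rerun the Lemma-based argument verbatim in the extended formalism, replacing $\Omega$, $\gamma$, $\xi_H$ by $\Omega_{\mathbb{R}}$, $\gamma_{\mathbb{R}_t}$, $\xi_{\bar H}$ and invoking non-degeneracy of $\Omega_{\mathbb{R}}$.

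The main obstacle is bookkeeping rather than conceptual. One must check that the prolongation $\mathcal{T}^{\tau^{\ast}}(\mathbb{R}\times E)$ and its canonical sections are genuinely the $\mathbb{R}$-decorated versions of those on $E$ --- that is, their compatibility with the projection $\pi$ --- and, most delicately, that the Hamilton equations of the chosen extended Hamiltonian $\bar H$ project onto the intended time-dependent equations on $E^{\ast}$; this is where the role of the conjugate momentum $e$ and the reparametrization freedom in the base curve must be pinned down precisely. Once that is settled, (i) $\Leftrightarrow$ (ii) follows directly from the autonomous theorem.
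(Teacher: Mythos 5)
Your route---pass to the extended algebroid $\mathbb{R}\times E$, take the extended Hamiltonian $\bar H=e+H$, lift $\gamma_{\mathbb{R}_t}$ to a section $\bar\gamma$ of $(\mathbb{R}\times E)^{\ast}$, and quote the autonomous De Leon--Marrero--Martinez theorem---is the natural one, and it is exactly the route the paper gestures at: Section~4 introduces the coordinates $(t,e,x,\mu)$ on $(\mathbb{R}\times E)^{\ast}$ and the projection $\pi$ for precisely this purpose. But be aware that the paper itself supplies no proof of this statement at all; the theorem is announced as ``theorem~5 expressed as time-dependent'' and left there, with condition (i) containing an undefined element $a$ and condition (ii) not even specifying which Hamilton--Jacobi equation is meant. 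So there is no argument in the paper to compare yours with; your proposal, if completed, would be doing work the paper omits.

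That said, as written your proposal is a plan rather than a proof, and the steps you defer as ``bookkeeping'' are where the mathematical content sits. Three points in particular are not automatic. First, the lift $\bar\gamma$: you say the $e$-component is chosen ``suitably'' so that $d\gamma_{\mathbb{R}_t}=0$ becomes $d\bar\gamma=0$, but closedness of $\bar\gamma=\gamma_t+e\,dt$ in the extended algebroid is a genuine extra condition coupling the $e$-component to the time variation of $\gamma_t$ (roughly, the algebroid differential of $e$ must equal $\partial_t\gamma_t$, so $e$ is only a local primitive, e.g.\ $e=\partial W/\partial t$ when $\gamma_t=dW_t$); this has to be stated and verified, not assumed, and it is exactly what turns $d(\bar H\circ\bar\gamma)=0$ into the time-dependent Hamilton--Jacobi equation $\partial W/\partial t+H\circ\gamma_t=\mathrm{const}$. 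Second, the identification of $\mathcal{T}^{\tau^{\ast}}(\mathbb{R}\times E)$, its canonical sections $\Theta_{\mathbb{R}}$, $\Omega_{\mathbb{R}}$, and the fact that the Hamilton equations of $\bar H$ project under $\pi$ to the time-dependent Hamilton equations for $H$ on $E^{\ast}$ must actually be carried out in the algebroid category (the analogue of the classical cosymplectic/homogeneous formalism); you flag this but do not do it, and without it the equivalence (i)$\Leftrightarrow$(ii) of the extended autonomous theorem does not yet translate into the statement at hand. Third, a complete proof must first repair the statement itself: replace $\rho(a)$ by $\rho\bigl(\xi_{H}^{\gamma_{\mathbb{R}_t}}(\sigma(t))\bigr)$ in (i) and write out the equation meant in (ii); your construction of $\bar H\circ\bar\gamma$ implicitly does this, but it should be made explicit. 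With these three items supplied, your argument would give a correct proof by reduction to the autonomous theorem; without them it remains a credible outline.
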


Now, Type I of Hamilton-Jacobi theory for non-autonomous case have the following forms:\\

\begin{theorem}
\label{thm-Type I of HamiltonJacobi theorem}
For Hamilton system $ (E^{\ast} , \Omega , H ), $ assume that $ \gamma_{\mathbb{R}_{t}} \in \Gamma (E^{\ast})  $ is a section of $ E^{\ast}, $ and $ \xi_{H} ^{\gamma_{\mathbb{R}_{t}}} = pr_{1} \circ \xi \circ \gamma_{t}, $ where $ \xi_{H} $ is the corresponding Hamiltonian section. If section $ \gamma_{\mathbb{R}_{t}} \in \Gamma (E^{\ast}) $ is cocycle with respect to $ pr_{1} : \mathcal{T}^{\tau^{\ast}} E \longrightarrow E, $ then $ \gamma_{\mathbb{R}_{t}} $ is a solution of the equation $ (\phi_{\gamma_{\mathbb{R}_{t}}}, \gamma_{t} ) \circ \xi_{H} ^{\gamma_{\mathbb{R}_{t}}} = \xi_{H} \circ \gamma_{\mathbb{R}_{t}}, $ which is called Type I of Hamilton-Jacobi equation for Hamiltonian system $ (E^{\ast} , \Omega , H ). $

$$\xymatrix@!=2cm{E^{\ast}\ar[r]^{\tau^{\ast}} & M \ar[r]^{\gamma_{\mathbb{R}_{t}}}\ar[d]^{\xi_{H} ^{\gamma_{\mathbb{R}_{t}}}} & E^{\ast} \ar[d]^{\xi_{H}}\\
 \mathcal{T}^{\tau^{\ast}} E & E \ar[l]^{(\phi_{\gamma_{\mathbb{R}_{t}}} , \gamma_{\mathbb{R}_{t}})} &  \mathcal{T}^{\tau^{\ast}} E \ar[l]^{pr_{1}}}$$
\end{theorem}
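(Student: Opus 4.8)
The plan is to reduce the statement to its autonomous counterpart (the Type~I Hamilton-Jacobi theorem established above) by transplanting the whole argument onto the extended Lie algebroid $(\mathbb{R}\times E,[\,,\,],\rho_{\mathbb{R}_t})$ over $\mathbb{R}\times M$. The essential point is that $\gamma_{\mathbb{R}_t}$ is a section of the dual of this extended algebroid and that $(\phi_{\gamma_{\mathbb{R}_t}},\gamma_{\mathbb{R}_t})=((Id,T\gamma_{\mathbb{R}_t}\circ\rho_{\mathbb{R}_t}),\gamma_{\mathbb{R}_t})$, as constructed just before the statement, is a Lie algebroid morphism of exactly the same form as $(\phi_\gamma,\gamma)$. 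Consequently the pullback identities $(\phi_{\gamma_{\mathbb{R}_t}},\gamma_{\mathbb{R}_t})^{\ast}\Theta=\gamma_{\mathbb{R}_t}$ and $(\phi_{\gamma_{\mathbb{R}_t}},\gamma_{\mathbb{R}_t})^{\ast}\Omega=-d\Theta$ continue to hold, where $\Theta$ and $\Omega$ are the Liouville section and the canonical symplectic section on the prolongation of $\mathbb{R}\times E$.

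First I would record the time-dependent analogue of Lemma~\ref{lemma-functorial legendre}, with $\tilde{\mathcal{T}}\lambda:=(\phi_{\gamma_{\mathbb{R}_t}},\gamma_{\mathbb{R}_t})\circ pr_1$: for $x,y\in E$ and $v,w\in\mathcal{T}^{\tau^{\ast}}E$ one has $(\phi_{\gamma_{\mathbb{R}_t}},\gamma_{\mathbb{R}_t})^{\ast}\Omega(x,y)=-d\gamma_{\mathbb{R}_t}(x,y)$ and $\tilde{\lambda}^{\ast}\Omega(v,w)=-d\gamma_{\mathbb{R}_t}(pr_1(v),pr_1(w))$, and also $\Omega(\tilde{\mathcal{T}}\lambda.v,w)=\Omega(v,w-\tilde{\mathcal{T}}\lambda.w)-d\gamma_{\mathbb{R}_t}(pr_1(v),pr_1(w))$. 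The derivation is word-for-word that of Lemma~\ref{lemma-functorial legendre}: the first two identities use $d\circ d=0$ and $\Omega=-d\Theta$, while the third uses $pr_1\circ(\phi_{\gamma_{\mathbb{R}_t}},\gamma_{\mathbb{R}_t})\circ pr_1=pr_1$, so that $v-\tilde{\mathcal{T}}\lambda.v$ is $pr_1$-vertical and $\Omega(v-\tilde{\mathcal{T}}\lambda.v,w-\tilde{\mathcal{T}}\lambda.w)=0$.

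With this lemma in hand I would rerun the computation of the autonomous Type~I theorem verbatim. Setting $v=\xi_H\circ\gamma_{\mathbb{R}_t}$ and taking any $w\in\mathcal{T}^{\tau^{\ast}}E$ with $pr_1(w)\neq 0$, and using $(\phi_{\gamma_{\mathbb{R}_t}},\gamma_{\mathbb{R}_t})\circ\xi_H^{\gamma_{\mathbb{R}_t}}=\tilde{\mathcal{T}}\lambda\circ\xi_H\circ\gamma_{\mathbb{R}_t}$, the third identity above expresses $\Omega((\phi_{\gamma_{\mathbb{R}_t}},\gamma_{\mathbb{R}_t})\circ\xi_H^{\gamma_{\mathbb{R}_t}},w)-\Omega(\xi_H\circ\gamma_{\mathbb{R}_t},w)$ through a term $\Omega(\xi_H\circ\gamma_{\mathbb{R}_t},\tilde{\mathcal{T}}\lambda.w)$ and a term $d\gamma_{\mathbb{R}_t}(pr_1(\xi_H\circ\gamma_{\mathbb{R}_t}),pr_1(w))$. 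The hypothesis that $\gamma_{\mathbb{R}_t}$ is a cocycle with respect to $pr_1$ annihilates the $d\gamma_{\mathbb{R}_t}$-term, and, exactly as in the autonomous proof, the first assertion of the lemma forces the remaining term to vanish as well; since this holds for every admissible $w$ and $\Omega$ is non-degenerate, we conclude $(\phi_{\gamma_{\mathbb{R}_t}},\gamma_{\mathbb{R}_t})\circ\xi_H^{\gamma_{\mathbb{R}_t}}=\xi_H\circ\gamma_{\mathbb{R}_t}$, which is the asserted Type~I equation.

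The step I expect to require the most care is not the algebra but pinning down the extended formalism so that it genuinely encodes time dependence. One must read $\mathcal{T}^{\tau^{\ast}}E$, $\Omega$ and $\xi_H$ in the statement as the prolongation of $\mathbb{R}\times E$, its canonical symplectic section and the associated Hamiltonian section, all living over $(\mathbb{R}\times E)^{\ast}$ with the $t$-conjugate momentum $e$ present, and then transport the dynamics down to $\mathbb{R}\times E^{\ast}$ via the projection $\pi(t,e,x,\mu)=(t,x,\mu)$. In particular the cocycle condition on $\gamma_{\mathbb{R}_t}$ has to be imposed over $\mathbb{R}\times M$ rather than $M$, so that the $dt$-component of $d\gamma_{\mathbb{R}_t}$ is accounted for; otherwise the resulting equation would drop the $\partial/\partial t$ contribution that distinguishes the non-autonomous Hamilton-Jacobi equation. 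Once the extended algebroid is fixed consistently, every step listed above is a transcription of the autonomous argument.
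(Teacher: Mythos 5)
Your proposal is essentially the paper's intended argument: the paper gives no separate proof of the time-dependent Type~I theorem, but presents it (after setting up the extended formalism with $\mathbb{R}\times E$, $\rho_{\mathbb{R}_t}$, $\gamma_{\mathbb{R}_t}$ and the morphism $(\phi_{\gamma_{\mathbb{R}_t}},\gamma_{\mathbb{R}_t})$) as a verbatim transcription of the autonomous Theorem, which is exactly the reduction you carry out, including the time-dependent analogue of the key Lemma and the rerun of the non-degeneracy argument. Your closing caution about imposing the cocycle condition over $\mathbb{R}\times M$ so that the $dt$-component of $d\gamma_{\mathbb{R}_t}$ is retained is a point the paper leaves implicit, and it is consistent with the extended formalism the paper sets up.
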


In the following, we show that for every symplectic morphism $ \varepsilon : E^{\ast} \longrightarrow E^{\ast} $, we can prove the following Type II of geometric Hamilton-Jacobi theorem for the Hamiltonian system.

\begin{theorem}
\label{thm-Type II of HamiltonJacobi theorem}
For Hamilton system $ (E^{\ast} , \Omega , H ), $ assume that $ \gamma_{\mathbb{R}_{t}} \in \Gamma (E^{\ast})  $ is a section of $ E^{\ast}, $ and $ \tilde{\mathcal{T}} \lambda := ( \mathcal{T} \lambda \circ \gamma_{\mathbb{R}_{t}}) \circ pr_{1} = (\phi_{\gamma_{\mathbb{R}_{t}}} , \gamma_{\mathbb{R}_{t}} ) \circ pr_{1} : \mathcal{T}^{\tau^{\ast}} E \longrightarrow \mathcal{T}^{\tau^{\ast}} E , $ and for every symplectic morphism $ \varepsilon : E^{\ast} \longrightarrow E^{\ast}, $ denote by $ \xi_{H} ^{\gamma_{\mathbb{R}_{t}}} = pr_{1} \circ \xi \circ\gamma_{\mathbb{R}_{t}}, $ where $ \xi_{H} $ is the corresponding Hamiltonian section. Then $ \varepsilon $ is a solution of the equation $ \mathcal{T} \varepsilon \circ \xi_{H} = \tilde{\mathcal{T}} \lambda \circ\xi_{H} \circ \varepsilon, $ if and only if it is a solution of the equation $ (\phi_{\gamma_{\mathbb{R}_{t}}}, \gamma_{\mathbb{R}_{t}} ) \circ \xi_{H} ^{\varepsilon} = \xi_{H} \circ \varepsilon, $ where $ \xi_{H \circ\varepsilon} \in \Gamma ( \mathcal{T}^{\tau^{\ast}} E ) $ is Hamiltonian section of the function $ H \circ \varepsilon : E^{\ast} \longrightarrow \mathbb{R}. $The equation $ (\phi_{\gamma_{t}}, \gamma_{\mathbb{R}_{t}} ) \circ \xi_{H} ^{\varepsilon} = \xi_{H} \circ \varepsilon, $ is called Type II of Hamilton-Jacobi equation for Hamiltonian system $ (E^{\ast} , \Omega , H ). $

$$\xymatrix@!=2cm{ E^{\ast} \ar[r] & E^{\ast}\ar[r]^{\tau^{\ast}} \ar[d]^{\xi_{H\circ \varepsilon}} \ar[dr]^{\xi_{H} ^{\varepsilon}} & M \ar[r]^{\gamma_{\mathbb{R}_{t}}} & E^{\ast} \ar[d]^{\xi_{H}}\\
 \mathcal{T}^{\tau^{\ast}} E & \mathcal{T}^{\tau^{\ast}} E \ar[l]^{\mathcal{T} \varepsilon} & E \ar[l]^{(\phi_{\gamma_{\mathbb{R}_{t}}} , \gamma_{t})} &  \mathcal{T}^{\tau^{\ast}} E \ar[l]^{pr_{1}}}$$

\end{theorem}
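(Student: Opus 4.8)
The plan is to obtain the statement as the time-dependent transcription of the autonomous Type~II theorem already proved, the point being that passing to the non-autonomous setting changes nothing structural: one only replaces the section $\gamma$ by the section $\gamma_{\mathbb{R}_t}$ supplied by the extended formalism, $\lambda$ by $\lambda_{\mathbb{R}_t}=\gamma_{\mathbb{R}_t}\circ\tau^{\ast}$, and correspondingly $\tilde{\mathcal{T}}\lambda$ by $(\phi_{\gamma_{\mathbb{R}_t}},\gamma_{\mathbb{R}_t})\circ pr_{1}$. First I would check that the key Lemma holds verbatim with $\gamma_{\mathbb{R}_t}$ in place of $\gamma$: its proof used only that $\gamma\in\Gamma(E^{\ast})$, the identity $(\phi_{\gamma},\gamma)^{\ast}\Theta=\gamma$, the factorization $pr_{1}\circ(\phi_{\gamma},\gamma)\circ pr_{1}=pr_{1}$, and $\Omega=-d\Theta$, and each of these persists for $\gamma_{\mathbb{R}_t}$. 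This gives: (i) $(\phi_{\gamma_{\mathbb{R}_t}},\gamma_{\mathbb{R}_t})^{\ast}\Omega(x,y)=-d\gamma_{\mathbb{R}_t}(x,y)$ and $\tilde{\lambda}^{\ast}\Omega(v,w)=-d\gamma_{\mathbb{R}_t}(pr_{1}(v),pr_{1}(w))$; and (ii) $\Omega(\tilde{\mathcal{T}}\lambda.v,w)=\Omega(v,w-\tilde{\mathcal{T}}\lambda.w)-d\gamma_{\mathbb{R}_t}(pr_{1}(v),pr_{1}(w))$.

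Next I would carry out the same computation as in the autonomous proof. Put $v=\xi_{H}\circ\varepsilon$ and take an arbitrary $w\in\mathcal{T}^{\tau^{\ast}}E$ with $\tilde{\mathcal{T}}\lambda(w)\neq 0$. Expanding $\Omega\big((\phi_{\gamma_{\mathbb{R}_t}},\gamma_{\mathbb{R}_t})\circ\xi_{H}^{\varepsilon},w\big)$ by means of $\xi_{H}^{\varepsilon}=pr_{1}\circ\xi_{H}\circ\varepsilon$ and assertion (ii), then rewriting the $d\gamma_{\mathbb{R}_t}$ term by assertion (i) as $\tilde{\lambda}^{\ast}\Omega(\xi_{H}\circ\varepsilon,w)=\Omega(\tilde{\mathcal{T}}\lambda\circ\xi_{H}\circ\varepsilon,\tilde{\mathcal{T}}\lambda.w)$, and finally using that $\varepsilon$ is symplectic so that the Hamiltonian section is natural, $\xi_{H}\circ\varepsilon=\mathcal{T}\varepsilon\circ\xi_{H\circ\varepsilon}$, one is led to
$$\Omega\big((\phi_{\gamma_{\mathbb{R}_t}},\gamma_{\mathbb{R}_t})\circ\xi_{H}^{\varepsilon},w\big)-\Omega(\xi_{H}\circ\varepsilon,w)=\Omega\big(\tilde{\mathcal{T}}\lambda\circ\xi_{H}\circ\varepsilon-\mathcal{T}\varepsilon\circ\xi_{H\circ\varepsilon},\,\tilde{\mathcal{T}}\lambda.w\big).$$
Since $\Omega$ is non-degenerate, the left-hand side vanishes for all admissible $w$ precisely when $(\phi_{\gamma_{\mathbb{R}_t}},\gamma_{\mathbb{R}_t})\circ\xi_{H}^{\varepsilon}=\xi_{H}\circ\varepsilon$, while the right-hand side vanishes precisely when $\mathcal{T}\varepsilon\circ\xi_{H\circ\varepsilon}=\tilde{\mathcal{T}}\lambda\circ\xi_{H}\circ\varepsilon$; identifying the two conditions yields the claimed equivalence, and the relation $(\phi_{\gamma_{\mathbb{R}_t}},\gamma_{\mathbb{R}_t})\circ\xi_{H}^{\varepsilon}=\xi_{H}\circ\varepsilon$ is exactly what the commutative diagram in the statement records.

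I expect the main difficulty to be organizational rather than analytic: one has to confirm that in the extended formalism $\pi:(\mathbb{R}\times E)^{\ast}\longrightarrow\mathbb{R}\times E^{\ast}$ the Liouville section $\Theta$, the symplectic section $\Omega=-d\Theta$, the Hamiltonian section $\xi_{H}$, the morphism $(\phi_{\gamma_{\mathbb{R}_t}},\gamma_{\mathbb{R}_t})$ and the projection $pr_{1}$ obey exactly the same structural relations invoked above. The one genuinely substantive step is the naturality identity $\xi_{H}\circ\varepsilon=\mathcal{T}\varepsilon\circ\xi_{H\circ\varepsilon}$ for a symplectic morphism $\varepsilon$, which follows from $\varepsilon^{\ast}\Omega=\Omega$ together with $i_{\xi_{H}}\Omega=dH$, $i_{\xi_{H\circ\varepsilon}}\Omega=d(H\circ\varepsilon)=\varepsilon^{\ast}dH$ and the definition of the prolongation $\mathcal{T}\varepsilon$. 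A secondary point that needs care is the final non-degeneracy argument: one must verify that annihilation of a section under $\Omega(\cdot,\tilde{\mathcal{T}}\lambda.w)$ for all $w$ does force that section to vanish, which uses that the image of $\tilde{\mathcal{T}}\lambda=(\phi_{\gamma_{\mathbb{R}_t}},\gamma_{\mathbb{R}_t})\circ pr_{1}$ together with the vertical subbundle spans $\mathcal{T}^{\tau^{\ast}}E$.
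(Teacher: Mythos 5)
Your proposal is correct and follows essentially the same approach as the paper: the time-dependent Type II theorem is given there with no separate argument, being a direct relabelling of the autonomous Type II theorem, and your proof is exactly that transcription — re-verify the key Lemma with $\gamma_{\mathbb{R}_{t}}$ in place of $\gamma$, expand $\Omega\big((\phi_{\gamma_{\mathbb{R}_{t}}},\gamma_{\mathbb{R}_{t}})\circ\xi_{H}^{\varepsilon},w\big)$ via assertions (i)--(ii), invoke symplecticity of $\varepsilon$, and conclude by non-degeneracy of $\Omega$. If anything you are more careful than the source, since you state the naturality identity in the correct form $\xi_{H}\circ\varepsilon=\mathcal{T}\varepsilon\circ\xi_{H\circ\varepsilon}$ (the paper's autonomous proof writes $\xi_{H}\circ\varepsilon=\mathcal{T}\varepsilon\circ\xi_{H}$) and you flag that the final non-degeneracy step requires the image of $\tilde{\mathcal{T}}\lambda$ together with the vertical subbundle to span $\mathcal{T}^{\tau^{\ast}}E$, a point the paper passes over silently.
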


\small{

}
\end{document}